%% This is file `elsarticle-template-1b-num.tex',
%%
%% Copyright 2009 Elsevier Ltd
%%
%% This file is part of the 'Elsarticle Bundle'.
%% ---------------------------------------------
%%
%% It may be distributed under the conditions of the LaTeX Project Public
%% License, either version 1.2 of this license or (at your option) any
%% later version.  The latest version of this license is in
%%    http://www.latex-project.org/lppl.txt
%% and version 1.2 or later is part of all distributions of LaTeX
%% version 1999/12/01 or later.
%%
%% The list of all files belonging to the 'Elsarticle Bundle' is
%% given in the file `manifest.txt'.
%%
%% Template article for Elsevier's document class `elsarticle'
%% with numbered style bibliographic references
%%
%% $Id: elsarticle-template-1b-num.tex 153 2009-10-08 05:26:41Z rishi $
%% $URL: http://lenova.river-valley.com/svn/elsbst/trunk/elsarticle-template-1b-num.tex $
%%
\documentclass[preprint,10pt,a4paper]{elsarticle}

%% Use the option review to obtain double line spacing
%% \documentclass[preprint,review,12pt]{elsarticle}

%% Use the options 1p,twocolumn; 3p; 3p,twocolumn; 5p; or 5p,twocolumn
%% for a journal layout:
%% \documentclass[final,1p,times]{elsarticle}
%% \documentclass[final,1p,times,twocolumn]{elsarticle}
%% \documentclass[final,3p,times]{elsarticle}
%% \documentclass[final,3p,times,twocolumn]{elsarticle}
%% \documentclass[final,5p,times]{elsarticle}
%% \documentclass[final,5p,times,twocolumn]{elsarticle}

%% if you use PostScript figures in your article
%% use the graphics package for simple commands
%% \usepackage{graphics}
%% or use the graphicx package for more complicated commands
%% \usepackage{graphicx}
%% or use the epsfig package if you prefer to use the old commands
%% \usepackage{epsfig}

%% The amssymb package provides various useful mathematical symbols
\usepackage{amscd,amsfonts,amsmath,amssymb}
\usepackage{mathrsfs}
\usepackage[letterpaper,hmargin=1.5in,vmargin=1.5in]{geometry}
\usepackage[all]{xy}
\newtheorem{theorem}{Theorem}
\newtheorem{definition}{Definition}
\newtheorem{lemma}{Lemma}

\newdefinition{remark}{Remark}
\newproof{proof}{Proof}
\newproof{pot}{Proof of Theorem \ref{th:2}}

%% The amsthm package provides extended theorem environments
%% \usepackage{amsthm}

%% The lineno packages adds line numbers. Start line numbering with
%% \begin{linenumbers}, end it with \end{linenumbers}. Or switch it on
%% for the whole article with \linenumbers after \end{frontmatter}.
%% \usepackage{lineno}

%% natbib.sty is loaded by default. However, natbib options can be
%% provided with \biboptions{...} command. Following options are
%% valid:

%%   round  -  round parentheses are used (default)
%%   square -  square brackets are used   [option]
%%   curly  -  curly braces are used      {option}
%%   angle  -  angle brackets are used    <option>
%%   semicolon  -  multiple citations separated by semi-colon
%%   colon  - same as semicolon, an earlier confusion
%%   comma  -  separated by comma
%%   numbers-  selects numerical citations
%%   super  -  numerical citations as superscripts
%%   sort   -  sorts multiple citations according to order in ref. list
%%   sort&compress   -  like sort, but also compresses numerical citations
%%   compress - compresses without sorting
%%
%% \biboptions{comma,round}

% \biboptions{}

\journal{Journal of Pure and Applied Algebra}

\begin{document}

\begin{frontmatter}

%% Title, authors and addresses

%% use the tnoteref command within \title for footnotes;
%% use the tnotetext command for the associated footnote;
%% use the fnref command within \author or \address for footnotes;
%% use the fntext command for the associated footnote;
%% use the corref command within \author for corresponding author footnotes;
%% use the cortext command for the associated footnote;
%% use the ead command for the email address,
%% and the form \ead[url] for the home page:
%%
%% \title{Title\tnoteref{label1}}
%% \tnotetext[label1]{}
%% \author{Name\corref{cor1}\fnref{label2}}
%% \ead{email address}
%% \ead[url]{home page}
%% \fntext[label2]{}
%% \cortext[cor1]{}
%% \address{Address\fnref{label3}}
%% \fntext[label3]{}

\title{Minimal Logarithmic Signatures for Sporadic Groups}

%% use optional labels to link authors explicitly to addresses:
%% \author[label1,label2]{<author name>}
%% \address[label1]{<address>}
%% \address[label2]{<address>}

\author[H]{Haibo Hong\corref{cor2}\fnref{fn1}}
\ead{honghaibo1985@163.com }

\author[L]{Licheng Wang\corref{cor1} \fnref{fn2}}
\ead{wanglc@bupt.edu.cn }

\author[A]{Haseeb Ahmad\corref{cor2} \fnref{fn3}}
\ead{haseeb\_ @hotmail.com}

\author[J]{Jing Li\corref{cor3} \fnref{fn4}}
\ead{l19861986@126.com}

\author[Y]{Yixian Yang\corref{cor4} \fnref{fn5}}
\ead{yxyang@bupt.edu.cn}

\address{Information Security Center, State Key Laboratory of
Networking and Switching Technology, Beijing University of Posts and
Telecommunications, Beijing, 100876 P.R. China}

%\address[Y]{Information Security Center, State Key Laboratory of
%Networking and Switching Technology, Beijing University of Posts and
%Telecommunications, Beijing, 100876 P.R. China}

%\cortext[cor1]{Corresponding author:wanglc@bupt.edu.cn}
%\fntext[fn1]{This is the first author footnote.}
%\fntext[fn2]{Another author footnote}
%\fntext[fn3]{Yet another author footnote.}

\begin{abstract}
%%As a special type of factorization of finite
As a special type of factorization of finite groups, logarithmic signature (LS) is used as the main component of cryptographic
keys for secret key cryptosystems such as PGM and public key
cryptosystems like $MST_1$, $MST_2$ and $MST_3$.
An LS with the shortest length is called a minimal logarithmic signature (MLS) and is
even desirable for cryptographic constructions.
The MLS conjecture  states that every finite simple group has
an MLS. Until now, the MLS conjecture has been proved true for some families of simple groups. In this paper, we will prove the existence of minimal logarithmic signatures for some sporadic groups. \\

%Static DH, MTI/A0, MTI/A1, MQV, MQV-1, MTI/C0\\ SOK-NIKD, Smart
%Protocol, Chen-Kudla, Wang Protocol, MB Protocol, Shim Protocol,
%\ID-MQV, enhanced MTI/C0 Text of abstract
\begin{keyword}
%% keywords here, in the form: keyword \sep keyword

%% MSC codes here, in the form: \MSC code \sep code
%% or \MSC[2008] code \sep code (2000 is the default)
Sporadic groups \sep (Minimal) logarithmic signature\sep Stabilizer\sep Solvable groups

\end{keyword}
\end{abstract}
\end{frontmatter}

%%
%% Start line numbering here if you want
%%
% \linenumbers

%% main text
\section{Introduction}
%Let $G$ be a finite group. A \emph{logarithmic signature (LS)} \cite{NNM10,NN11}
%for $G$ is an ordered tuple $\alpha = [A_1, A_2, \cdots, A_n]$ of  subsets $A_i$ of $G$ so that every element $g \in G$ can be expressed uniquely as a product $g = a_1\cdot a_2 \ldots a_n$, where $a_i \in A_i$, $A_i=[a_{i1},a_{i2}, \cdots, a_{is_i}]$, $|A_i|=s_i$. The sets $A_i$ are called the blocks of the LS $\alpha$. The length of $\alpha$ is defined to be $l(\alpha) =\sum_{i=1}^{n} |A_i|$.

%
%The concept of logarithmic signatures were first put forward by Magliveras
%\cite{M86} and used in construction of the PGM cryptosystem\cite{MM92}.
%In \cite{MM92}, the
%algebraic properties of logarithmic signatures and cryptosystem PGM
%were discussed by Magliveras and Memon.

%Public Key Cryptography depends on hard mathematical
%problems which often drive from number theory that deals mostly with abelian groups.

Currently, most asymmetric cryptographic primitives are based on the perceived intractability of certain mathematical problems in very large finite abelian groups \cite{LMTW09}. Prominent hard problems consist of the problem of factoring large integers, the Discrete Logarithm
Problem (DLP) over a finite field $F_q$ or an elliptic curve, etc. However, due to quantum algorithms for integer factoring and solving the DLP, most known public-key systems will be insecure when quantum computers become practical.  %designing trapdoor one-way functions based on
%non-abelian finite groups may
Therefore, it is an imminent work to design effective cryptographic schemes which can resist quantum attack. Actually, several attempts using problems from mathematical areas like group theory have been made and some available cryptographic schemes such as $MST_1$, $MST_2$ and $MST_3$, which take advantage of logarithmic signatures(LS)\cite{MST02,LT05,LMTW09,MST12}, have been devised successively.
Generally, the difficulty of the factorization problem for a given logarithmic signature in non-abelian groups can be used as the basis of the security of $MST_3$ cryptosystem.
%Besides, there are some interesting papers studying attacks on
%$MST_1$, $MST_2$ and $MST_3$\cite{GS02,BSGM05,MSTZ08,MST12}.
%Furthermore, LS itself has been found mathematic interests: Studying LSs involves studying basic structure of finite simple groups and related geometric object -- spreads, and a good LS for a group allows us to store the group in an efficient way and  helps us understand the structure of the group better.

In order to apply logarithmic signatures in some practical cryptographic
schemes effectively, the question of finding minimal logarithmic signatures (MLS) arises naturally. Especially,  LS-based cryptosystems using MLSs will have the minimal space complexity. Besides, from the mathematical point of view, an MLS for a group can help us understand the structure of the group better. However, the problem of existence of MLSs for finite groups should be taken into account firstly.
%
%Let $|G|$ =$\prod_{i=1}^{s}$$p_{i}^{m_i}$ be the prime power
%decomposition of $|G|$ and $\alpha$ = [$A_1$, $A_2$,
%$\ldots$, $A_n$] be an LS for $G$. It was first observed i n\cite{GS02} that $l(\alpha)$$\geq$ $\sum_{i=1}^{s}$$m_{i} p_i$.
%Meanwhile, it has been verified that equality holds -- in this case $\alpha$ is called a $minimal$ $logarithmic$ $signature$ (MLS)\cite{NNM10,NN11} for $G$ -- if and only
%if the size $|A_i|$ is a prime or
%4 for $1\leq i\leq k$ \cite{GS02,LT05,NNM10}. An MLS has all its blocks of smallest possible size. The space complexity will be minimal. Hence, it is necessary to apply MLSs in the LS-based cryptosystems. However, the issue of existence of MLSs for finite groups should be firstly taken into account.

In fact, some effective work has been done in searching the MLSs for finite groups. In 2003, Vasco et al. \cite{GRS03} proved that the MLSs exist for all groups of order less than 175,560  . In 2004, Holmes \cite{H04} gave MLSs for sporadic groups $J_1$, $J_2$, $HS$, $M^cL$, $He$, $Co_3$, $Ru$ and $Suz$. In 2005, Lempken et al. \cite{LT05} built MLSs for the special linear groups $SL_n(q)$ and the projective special linear groups $PSL_n(q)$ when $\gcd(n, q - 1)\in \{1, 4, p \}$ where $p$ is a prime. They also show that, with a few exceptions, an MLS exists for all groups of order $\leq 10^{10}$ including five Sporadic groups -- $M_{11}$, $M_{12}$, $M_{22}$, $M_{23}$ and $M_{24}$.
 Recently, Nikhil, Nidhi and Magliveras \cite{NNM10,NN11} proposed MLSs for the groups $GL_n(q)$, $SL_n(q)$, $Sp_n(q)$ and $O_{2m}^{\pm}(q)$. Meanwhile, they put forward the MLS conjecture that says every finite simple group has an MLS. Once the conjecture comes true, as a direct consequence of $\mathrm{Jordan-H\ddot{o}lder}$ Theorem, then any finite groups also have MLSs. While until now, the MLS conjecture remains open. So in this sense, it is meaningful for us to continue the work. In this paper, we go on considering the existence of MLSs for the remained 13 types of sporadic groups.

%Actually, from Cayley Theorem, every finite group is isomorphic to a permutation group.
%From the viewpoint of group action, sporadic groups can be reduced to the permutation representation of the corresponding subgroups.
Our basic technique is as follows: Given a permutation representation of a group $G$, identify a point $p$ so that its stabilizer $G_p$ can be factored
through an minimal length logarithmic signature and such
that there exists a complete set of representatives of $G$
modulo $G_p$ which moves $p$ cyclically. More generally, we can also divide $G$
into $G_p$ and some cyclic sets (subgroups), if every part has an MLS, then $G$ also has an MLS.

For each kind of sporadic groups, the proposed MLS has the similar structures $[H,G_w]$, where $G_w$ is the stabilizer of corresponding sporadic groups $G$; $H$ is a product of some appropriate subgroups of $G$ (see Table 1).

\footnotesize
\begin{table}[htbp]
\begin{center}
\caption{MLSs for Sporadic Groups}\label{tbl:MLS}
\scalebox{0.8}[0.85]{
\begin{tabular}{|c|c|c|}
\hline
          G    & $H$
               & $G_w$       \\
    \cline {1-3}

   $Co_1$     & $H=ABCD$
              & $G_w= 2^{11}: M_{24}$  \\

              & $|A|=3^6,|B|=5^3, |C|=7,|D|=13$
              &    \\
\cline{1-3}

   $Co_2$      & $H=ABC$
               & $G_w= 2^{10}: M_{22}:2$ \\

               & $|A|=3^4,|B|=5^2, |C|=23$
               &   \\

 \cline{1-3}
  $Fi_{22}$    & $H=ABCD$
               & $G_w= PSU_6(2)$   \\
               & $|A|=2,|B|=3^5, |C|=5,|D|=13$
               &                       \\

 \cline{1-3}

  $Fi_{23}$      & $H=ABC$
                 & $G_w= 2.Fi_{22}$ \\

                 & $|A|=3^4,|B|=17, |C|=23$
                 &    \\

   \cline{1-3}

   $Fi_{24}'$    & $H=ABCD$
                 & $G_w= Fi_{23}$ \\

                 & $|A|=2^3,|B|=3^3, |C|=7^2,|D|=29$
                 &  \\
   \cline{1-3}

   $Th$         &  $H=ABCDEF$
                &  $G_w= ^3D_4(2):3$ \\

                &  $|A|=2^3,|B|=3^5, |C|=5^3,|D|=7,|E|=19, |F|=31$
                &                 \\
    \cline{1-3}

    $HN$        & $H=ABCD$
                & $G_w= A_{12}$ \\

                & $|A|=2^6,|B|=3, |C|=5^5,|D|=19$
                &                               \\

   \cline{1-3}

    $B$         & $H=ABCDEF$

                & $G_w= 2.^2E_6(2):2$ \\

                & $|A|=2^3,|B|=3^4, |C|=5^4,|D|=23,|E|=31, |F|=47$
                &                 \\
    \cline{1-3}

    $M$        &  $H=ABCDEFGH$
               &  $G_w=2.B$ \\
               &  $|A|=2^5,|B|=3^7, |C|=5^3,|D|=11,|E|=13^2, |F|=41, |G|=59, |H|=71$
               &  \\

     \cline{1-3}

    $O'N$       &  $H=ABCD$
                &  $G_w= PSL_3(7):2$   \\

                &  $|A|=2^2,|B|=3^2, |C|=11,|D|=31$
                &    \\
    \cline{1-3}

   $Ly$         &  $H=ABCDE$
                &  $G_w= G_2(5)$  \\

                &  $|A|=2^2,|B|=3^4, |C|=11,|D|=37,|E|=67$
                &    \\
    \cline{1-3}

    $J_3$       &  $H=ABCD$
                &  $G_w= 3 \times (3 \times A_6) : 2$  \\

                &  $|A|=2^2,|B|=3^2, |C|=17,|D|=19$
                &    \\
    \cline{1-3}

   $J_4$       &  $H=ABCDE$
                &  $G_w= 2^{11}:M_{24}$  \\

                &  $|A|=11^2,|B|=29, |C|=31,|D|=37,|E|=43$
                &    \\
    \cline{1-3}

% \cline{1-6}
%
%   $PU_n(q)$     & $\eta(a)$, $\eta :U_n(q) \rightarrow PU_n(q)$  & $\eta(b)$, $\eta :U_n(q) \rightarrow PU_n(q)$  & $q^{4m-1}$   &  $GL_1(q^{2}))$ & $PU_{2m-1}(q))$ \\
%
%   \cline{2-6}
%
%                 & $\eta(a)$, $\eta :U_n(q) \rightarrow PU_n(q)$  & $\eta(b)$, $\eta :U_n(q) \rightarrow PU_n(q)$  & $q^{4m-3}$   &  $GL_1(q^{2}))$ & $PU_{2m-2}(q))$ \\
%   \cline{1-6}
%  $PSU_n(q)$     & $\eta(a)$, $\eta :SU_n(q) \rightarrow PSU_n(q)$  & $\eta(b)$, $\eta :SU_n(q) \rightarrow PSU_n(q)$  & $q^{4m-1}$
%  &  $GL_1(q^{2}))$ & $PSU_{2m-1}(q))$ \\
%
%  \cline{2-6}
%
%                 & $\eta(a)$, $\eta :SU_n(q) \rightarrow PSU_n(q)$  & $\eta(b)$, $\eta :SU_n(q) \rightarrow PSU_n(q)$  & $q^{4m-3}$
%  &  $GL_1(q^{2}))$ & $PSU_{2m-2}(q))$ \\
%
\hline
\end{tabular}
}
\end{center}
\end{table}
\normalsize

\section{Preliminaries}

\subsection{Minimal Logarithmic Signature}

\begin{definition}[Logarithmic Signature (LS)]\cite{GS02,LT05}

Let $G$ be a finite group. Let $\alpha =[A_1, \cdots, A_s]$ be a sequence of ordered subsets $A_i$ of $G$ such that $A_i = [\alpha_{i1}, \cdots, \alpha_{ir_i}]$ with $\alpha_{ij} \in G$, $(1 \leq j \leq r_i)$. If $|G|=r_1\cdot r_2\cdots r_s$ and each
$g \in G$ is uniquely represented as a product
 \begin{center}
  $g = \alpha_{1j_1}\cdots \alpha_{sj_s}$
\end{center}
 with $\alpha_{ij_i}\in A_i (1 \leq i \leq s)$, then $\alpha$ is called a logarithmic signature(LS) for $G$.
\end{definition}

The sequences $A_i$ are called the blocks of $\alpha$, the length of $\alpha$ is defined to be $l(\alpha) =\underset{i=1}{\overset{s}\sum}
 r_i$. let $|G| =\underset{j=1}{\overset{k}\prod}p_{j}^{a_j}$ be the prime power
decomposition of $|G|$ and $\alpha = [A_1, A_2,\ldots, A_s]$ be an LS for $G$. From \cite{GS02}, we can see that $l(\alpha) \geq \underset{j=1}{\overset{k}\sum} a_{j} p_j$.

\begin{definition}[ Minimal Logarithmic Signature (MLS)]\cite{LT05}
Let $\alpha$ be a logarithmic signature for a finite
group $G$, if $l(\alpha) = \underset{j=1}{\overset{k}\sum} a_{j} p_j$, then $\alpha$
is called a minimal logarithmic signature (MLS).
\end{definition}

\begin{lemma}\cite{H04,NN11} \label{le:1}
If $G (G|X)$ contains cyclic subgroups(sets) $A_1,\cdots,A_n$  and $G_w$  such that
\begin{itemize}
  \item [(i)] $|G| = |A_1 |\cdots|A_n|\cdot|G_w|$
  \item [(ii)] $A_i$ has an MLS for all $i$ and $G_w$ has an MLS
  \item [(iii)] $G =A_1\cdots A_n\cdot G_w$ ($H= A_1\cdots A_n \subseteq G$ is a sharply transitive set on $X$ with respect to $w \in X$)
\end{itemize}
 then $[A_1,A_2,\cdots,A_n,G_w]$ is an LS for $G$ and $G$ has an MLS.

\end{lemma}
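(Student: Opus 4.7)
My plan is to verify that the concatenation $[A_1, A_2, \ldots, A_n, G_w]$ is an LS for $G$ by separately checking existence and uniqueness of the product representation, and then to promote this LS to an MLS by replacing each block with an MLS of its own.

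First I would observe that condition (iii) immediately gives that every $g \in G$ equals some product $a_1 a_2 \cdots a_n g_w$ with $a_i \in A_i$ and $g_w \in G_w$. For the uniqueness required by the definition of LS, I would invoke a counting argument: the map $(a_1, \ldots, a_n, g_w) \mapsto a_1 a_2 \cdots a_n g_w$ from $A_1 \times \cdots \times A_n \times G_w$ into $G$ is surjective by (iii), while condition (i) forces the domain and codomain to have the same cardinality, so the map is a bijection. This yields $[A_1, \ldots, A_n, G_w]$ as a logarithmic signature of $G$. In the permutation-theoretic formulation of (iii), the same conclusion comes from $H = A_1 \cdots A_n$ being sharply transitive on $X$, which makes $H$ a transversal for $G_w$ and produces the unique factorization $G = H \cdot G_w$; uniqueness within $H$ of the decomposition $h = a_1 \cdots a_n$ follows from the same cardinality matching.

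To upgrade this LS to an MLS, I would invoke (ii) to pick an MLS $\alpha_i$ of each $A_i$ and an MLS $\beta$ of $G_w$, and concatenate everything into a single sequence $[\alpha_1, \alpha_2, \ldots, \alpha_n, \beta]$. Because the product of the block lengths is still $|A_1| \cdots |A_n| \cdot |G_w| = |G|$ and the unique-factorization property is preserved block-by-block, the concatenation is again an LS of $G$, with length
\[
l(\alpha_1) + \cdots + l(\alpha_n) + l(\beta).
\]
By the MLS hypothesis on each $A_i$ and on $G_w$, each summand equals $\sum b\,q$ over the prime factorization of the corresponding order. Multiplicativity of the prime factorization, combined with $|G| = \prod_i |A_i| \cdot |G_w|$, collects these summands into $\sum_j a_j p_j$ where $|G| = \prod_j p_j^{a_j}$, which matches the lower bound on LS length recorded after Definition 1. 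Hence the concatenation is an MLS of $G$.

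There is essentially no serious obstacle here: the argument is a straightforward gluing of MLSs along a set-theoretic factorization, and the only step that might look subtle, namely uniqueness of the decomposition $g = a_1 \cdots a_n g_w$, is forced purely by cardinality once (i) and (iii) are assumed. The real work of the paper will not be in this lemma but in exhibiting, for each of the 13 sporadic groups in Table~\ref{tbl:MLS}, explicit cyclic $A_i$ and a stabilizer $G_w$ satisfying (i)--(iii).
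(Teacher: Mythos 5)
Your proof is correct: the counting argument (surjectivity from (iii) plus cardinality matching from (i) gives bijectivity, hence unique factorization) and the subsequent block-by-block refinement with the additivity of $\sum_j a_j p_j$ over the factorization $|G|=\prod_i|A_i|\cdot|G_w|$ is exactly the standard argument from the cited sources. The paper itself does not prove this lemma but quotes it from \cite{H04,NN11}, and your reconstruction matches that standard proof, so nothing further is needed.
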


\begin{lemma}\cite{NNM10,NN11}\label{le:2}
If $G$ is a solvable group, then $G$ has an MLS.
\end{lemma}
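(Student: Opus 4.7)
The plan is to exploit the structure theorem for solvable groups, namely that every finite solvable group $G$ admits a composition series
\[
1 = G_0 \triangleleft G_1 \triangleleft \cdots \triangleleft G_n = G
\]
in which every factor $G_{i+1}/G_i$ is cyclic of prime order. Writing $|G_{i+1}/G_i| = p_i$, and gathering repeated primes according to the prime power decomposition $|G| = \prod_{j=1}^{k} p_j^{a_j}$, one has $n = \sum_{j=1}^{k} a_j$ and $\sum_{i=0}^{n-1} p_i = \sum_{j=1}^{k} a_j p_j$.

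Next, for each $i$ I would fix a set $T_i = [t_{i,1}, \ldots, t_{i,p_i}]$ of left coset representatives of $G_i$ in $G_{i+1}$, ordered so that $t_{i,1}$ is the identity. Then $|T_i| = p_i$, and by induction on $i$, every element $g \in G_{i+1}$ has a unique expression of the form $g = t_{i,j_i}\, t_{i-1, j_{i-1}} \cdots t_{0,j_0}$, because each element of $G_{i+1}$ is a unique coset representative in $T_i$ times a unique element of $G_i$. Taking $i = n-1$, the ordered sequence $\alpha = [T_{n-1}, T_{n-2}, \ldots, T_0]$ therefore satisfies the defining property of a logarithmic signature for $G$.

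Finally I would compute the length:
\[
l(\alpha) = \sum_{i=0}^{n-1} |T_i| = \sum_{i=0}^{n-1} p_i = \sum_{j=1}^{k} a_j p_j,
\]
which matches the lower bound from the inequality $l(\alpha) \geq \sum_{j=1}^{k} a_j p_j$ recorded in the preliminaries. Hence $\alpha$ is an MLS for $G$.

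There is no real obstacle here; the only nontrivial ingredient is the existence of a composition series with cyclic prime-order factors, which is the standard characterization of finite solvability. The construction is essentially forced: transversals of a chief/composition series always produce an LS, and when the factors are of prime order the resulting length automatically meets the Gaussian-type lower bound, making the LS minimal.
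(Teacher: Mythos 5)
Your proof is correct and is essentially the standard argument that the paper defers to by citing \cite{NNM10,NN11}: take a composition series with cyclic prime-order factors, use transversals of consecutive terms as blocks, and observe that the resulting length $\sum_j a_j p_j$ meets the lower bound. The paper gives no proof of its own for this lemma, so there is nothing further to compare.
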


\begin{lemma}\cite{NNM10,NN11}\label{le:3}
Let $G$ be a finite group and $x \in G$ be an element of order $t$. For $s \in N, s \leq t$, let $S = \{x^i | 0 \leq i < s\}$ be a cyclic set. Then $S$ has an MLS $\alpha=[A_1,A_2,\cdots, A_k]$ satisfying the following condition:

\begin{center}
      There exist a list $[j_i,j_2,\cdots, j_k]$ such that $x^{j_i}\in A_i, 1\leq i\leq k$, $\underset{i=1} {\overset{k} \sum} j_i <s$.
\end{center}
\end{lemma}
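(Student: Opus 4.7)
The plan is to construct an explicit MLS for the cyclic set $S$ by exploiting the mixed-radix expansion of integers in $[0,s)$. Write the prime factorization $s=p_1^{a_1}\cdots p_m^{a_m}$ and list the prime factors with multiplicity as $q_1,q_2,\ldots,q_k$, so that $k=\sum_{j=1}^m a_j$ and $q_1 q_2\cdots q_k=s$. Set $P_0=1$ and $P_i=q_1 q_2\cdots q_i$ for $1\leq i\leq k$, and define the blocks $A_i=\{x^{jP_{i-1}}:0\leq j<q_i\}$.

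First I would check that $\alpha=[A_1,\ldots,A_k]$ is an LS for $S$. Every integer $n$ with $0\leq n<s$ has a unique mixed-radix representation $n=c_1P_0+c_2P_1+\cdots+c_kP_{k-1}$ with $0\leq c_i<q_i$. Since $n<s\leq t$, the map $i\mapsto x^i$ is injective on $[0,t)$ and no wrap-around modulo $t$ occurs, so $x^n=\prod_{i=1}^k x^{c_iP_{i-1}}$ with each factor lying in the corresponding $A_i$; uniqueness of the integer coefficients $c_i$ lifts to uniqueness of this group product. The length is $l(\alpha)=\sum_{i=1}^k q_i=\sum_{j=1}^m a_j p_j$, matching the general lower bound recalled after Definition~1, so $\alpha$ is minimal.

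For the supplementary condition, I would take $j_i=P_{i-1}$ for each $1\leq i\leq k$, which is the ``$j=1$'' element of $A_i$, so $x^{j_i}\in A_i$ is automatic. Then $\sum_{i=1}^k j_i=\sum_{i=0}^{k-1}P_i$, and a short induction on $k$ using $q_i\geq 2$ (hence $P_i\geq 2P_{i-1}$, so telescopically $\sum_{i=0}^{k-1}P_i<2P_{k-1}\leq q_k P_{k-1}=P_k$) yields $\sum_{i=1}^k j_i<s$.

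The construction is essentially mechanical, so the only points that genuinely need verification are that the integer identity $n=\sum c_iP_{i-1}$ lifts faithfully to a group-theoretic factorization of $x^n$ (which is precisely why the hypothesis $s\leq t$ matters), and the geometric-type inequality $\sum_{i=0}^{k-1}P_i<P_k$. Neither is a real obstacle, so I do not anticipate any serious difficulty beyond picking the right blocks $A_i$ and the right representatives $j_i$.
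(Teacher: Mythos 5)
Your construction is correct: the mixed-radix blocks $A_i=\{x^{jP_{i-1}}:0\leq j<q_i\}$ do give a factorization of $S$ of length $\sum_j a_jp_j$ (the key points --- that all product exponents stay below $s\leq t$ so no wrap-around occurs, and that $\sum_{i=0}^{k-1}P_i<P_k$ --- are both handled properly), and choosing $j_i=P_{i-1}$ satisfies the supplementary condition. The paper itself gives no proof of this lemma, merely citing \cite{NNM10,NN11}, and your argument is exactly the standard construction used there, so there is nothing further to compare.
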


\begin{lemma}\cite{NNM10}\label{le:4}
Let $G$ be a finite group and $[A_1, \cdots, A_r]$ be an LS for $G$. If there exists an MLS for each
subset $A_j$ ($1 \leq j \leq r$), then $G$ has an MLS.
\end{lemma}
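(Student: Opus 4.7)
The plan is to build an MLS for $G$ by concatenating MLSs of the individual blocks $A_j$. Concretely, let $\alpha_j = [A_{j,1}, A_{j,2}, \ldots, A_{j,s_j}]$ be an MLS for $A_j$ (whose existence is hypothesized), and consider
\[
\beta = [A_{1,1}, \ldots, A_{1,s_1}, A_{2,1}, \ldots, A_{2,s_2}, \ldots, A_{r,1}, \ldots, A_{r,s_r}].
\]
I will argue that $\beta$ is an LS for $G$ and then show its length already attains the theoretical minimum $\sum a_j p_j$ of Definition~2.

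First I would check that $\beta$ is an LS. Given $g \in G$, since $[A_1,\ldots,A_r]$ is an LS for $G$ there is a unique tuple $(a_1,\ldots,a_r) \in A_1 \times \cdots \times A_r$ with $g = a_1 a_2 \cdots a_r$. Because $\alpha_j$ is in particular a factorization of $A_j$, each $a_j$ has a unique representation $a_j = \alpha_{j,1}\alpha_{j,2}\cdots\alpha_{j,s_j}$ with $\alpha_{j,i} \in A_{j,i}$. Concatenating these two uniqueness statements produces a unique $\beta$-decomposition of $g$. The cardinality identity $|G| = \prod_{j,i} |A_{j,i}|$ follows by combining $|G| = \prod_j |A_j|$ (forced by $[A_1,\ldots,A_r]$ being an LS) with $|A_j| = \prod_i |A_{j,i}|$ (from $\alpha_j$ being an LS of $A_j$).

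Next I would verify minimality of $l(\beta)$. Write the prime factorization $|A_j| = \prod_k p_k^{a_{j,k}}$ so that the MLS hypothesis on $\alpha_j$ gives $l(\alpha_j) = \sum_k a_{j,k} p_k$. Since $|G| = \prod_j |A_j|$, the prime exponents add and $|G| = \prod_k p_k^{\,\sum_j a_{j,k}}$. Summing the block lengths yields
\[
l(\beta) = \sum_{j=1}^{r} l(\alpha_j) = \sum_{j=1}^{r}\sum_k a_{j,k}\, p_k = \sum_k \Bigl(\sum_{j} a_{j,k}\Bigr) p_k,
\]
which is exactly the minimal possible length prescribed by Definition~2 for $G$. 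Hence $\beta$ is an MLS.

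There is no substantial obstacle here; the whole content is bookkeeping about concatenation and the additivity of prime exponents under multiplicative partitions $|G| = \prod_j |A_j|$. The one subtle point worth flagging is that ``MLS for a subset $A_j$'' must be interpreted in the generalized sense already used in Lemma~3 (factorization of a non-subgroup subset with length $\sum_k a_{j,k}p_k$ relative to the prime factorization of $|A_j|$), and this is precisely what makes the length counting collapse correctly when the subset sizes multiply to $|G|$.
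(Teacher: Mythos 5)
Your argument is correct: the concatenation $\beta$ inherits unique factorization from the two levels of uniqueness, and the length computation via additivity of prime exponents shows $l(\beta)$ meets the lower bound $\sum_k a_k p_k$ quoted from \cite{GS02}, hence is minimal. The paper itself states this lemma as a citation to \cite{NNM10} and gives no proof, but your concatenation-and-bookkeeping argument is the standard one for this result, and your flag about interpreting ``MLS of a subset'' in the generalized (non-subgroup) sense is the right subtlety to note.
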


\subsection{Sporadic Groups and Related Algebras}

From \cite{W09}, we can see that the sporadic groups come in four classes, three consecutive levels plus the Pariahs. The levels are Mathieu's ( $M_{11}$, $M_{12}$, $M_{22}$, $M_{23}$ and $M_{24}$), Leech's ($Co_1$, $Co_2$, $Co_3$, $HS$, $McL$, $J_2$, $Suz$) and Monster's ($M$, $B$, $Fi_{22}$, $Fi_{23}$, $Fi'_{24}$, $HN$, $Th$, $He$), plus 6 Pariah groups ($Ru$, $O'N$, $Ly$, $J_1$, $J_3$, $J_4$). So there are 26 sporadic groups.

 If $\{e_i \}(1\leq i\leq 24)$ are 24 linearly independent vectors in $R^{24}$, then the points
 $x = \sum_{i=1}^{24} n_ie_i(n_i\in Z)$ form a lattice. If the lattice $\Lambda_{24}$ has the following list of properties \cite{W09}:

\begin{enumerate}
\item[(i)]   it can be generated by the columns of a certain $24\times24$ matrix with determinant 1.
\item[(ii)]  the square of the length of any vector in $\Lambda_{24}$ is an even integer.
\item[(iii)] The length of any non-zero vector in $\Lambda_{24}$ is at least 2.
\end{enumerate}
then $\Lambda_{24}$ is called Leech Lattice.

Actually, Leech lattice are  Z-linear combinations of three types of $(\pm4, \pm4, 0^{22})$, $(2^8,0^{16})$ and
$(-3,1^{23})$ \cite{W09}.Leech Lattice plays an important role in the subsequent construction of the Conway groups. We also need utilize it to construct MLSs for the Conway groups.

The Parker's Loop $\mathbb{P}$ was discovered by Richard Parker. This loop $\mathbb{P}$
 is not a group, but behaves rather like the units $\{\pm1, \pm i_0,\cdots, \pm i_6\}$ of
the octonions \cite{W09}, which is a non-associative
double cover of an elementary abelian group of order 8.
It can be checked that Parker’s loop is a non-associative inverse loop. Here an
inverse loop is a set with a binary operation, an identity element and an inverse
map, satisfying $x1 = x = 1x$, $(x^{-1})^{-1} = x$ and $x^{-1}(xy) = y = (yx)x^{-1}$.

The algebra of $n \times n$ matrices is defined by the well-known matrix product,
which is associative but non-commutative as long as $n > 1$. Meanwhile, we can derive
a commutative Jordan product which is defined by $A \circ B = \frac{1}{2} (AB + BA)$.
Then given two matrices $A$ and $B$, the corresponding Griess product is 4 times the Jordan product:
$A \ast B = 2(AB + BA)$.
%The definition of the algebra is simplified if we take into account the fact that
%it is really a symmetric trilinear form: that is the inner product of $x \ast y$ with $z$
%equals the inner product of $x$ with $y \ast z$. First we may consider the 300-space
%as being the space of symmetric $24\times 24$ matrices.

\section{Main Results}

\subsection{MLS for $Co_1$ and $Co_2$}

 John H. Conway found in 1968 the automorphism group of the Leech lattice $\Lambda_{24}$ with an enormous group of order

 \begin{center}
$|Aut(\Lambda_ {24})|=|Co_0| = 2^{22} \cdot 3^9 \cdot 5^4 \cdot7^2\cdot 11 \cdot13 \cdot 23$
 \end{center}

This ``zero degree" Conway group is not simple, but it has a simple quotient, named $|Co_1|= |Co_0/Z_2|$, where $Z_2$ is the center of order 2 generated by -1. Two other simple groups $Co_2$ and $Co_3$ were also discovered by Conway using different stabilizers of $Co_0$.

From \cite{W09}, we can see that $Co_1$, $Co_2$ and $Co_3$ are the corresponding stabilizer of a vector of norm 8, norm 6 and norm 4, respectively. Then the corresponding orders are:
\begin{eqnarray*}
|Co_1| &=& 2^{21}.3^9 . 5^4. 7^2. 11. 13.  23\\
|Co_2| &=& 2^{18}. 3^6 . 5^3. 7 . 11 . 23\\
|Co_3| &=& 2^{10}. 3^7 . 5^3 . 7 . 11 . 23
\end{eqnarray*}

Actually, Holmes \cite{H04} proved the existence of MLS for $Co_3$. Therefore, we only consider $Co_1$ and $Co_2$.

\begin{theorem} \label{th:3.1}
 $Co_1$ has an MLS.
\end{theorem}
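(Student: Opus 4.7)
I plan to apply Lemma~\ref{le:1} with the factorization
\[
Co_1 \;=\; A\cdot B\cdot C\cdot D\cdot G_w,
\]
taking $G_w = 2^{11}{:}M_{24}$ as the point-stabilizer (a well-known maximal subgroup of $Co_1$ arising as the image in $Co_0/\{\pm 1\}$ of the monomial subgroup of $Co_0$ fixing a coordinate frame of the Leech lattice) and $A,B,C,D$ as subsets of orders $3^6,\;5^3,\;7,\;13$ respectively. Condition (i) of Lemma~\ref{le:1} holds by the index computation $|Co_1|/|G_w| = 3^6\cdot 5^3\cdot 7\cdot 13$, and the individual block lengths will total the MLS lower bound $42+27+20+14+11+13+23 = 150$, so any such construction is automatically minimal.

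For condition (ii), I need each block to admit an MLS. The group $G_w$ is a split extension of the elementary abelian kernel $2^{11}$ by $M_{24}$: the kernel is solvable and thus has an MLS by Lemma~\ref{le:2}, and $M_{24}$ was shown to have an MLS by Lempken and van Trung, so concatenating a transversal-based LS of $G_w$ and applying Lemma~\ref{le:4} yields an MLS for $G_w$ of length $97$. The blocks $A$ and $B$ will be chosen as $p$-subgroups of orders $3^6$ and $5^3$ sitting inside a Sylow $3$- and a Sylow $5$-subgroup of $Co_1$; as $p$-groups they are solvable, so Lemma~\ref{le:2} applies. The blocks $C$ and $D$ are cyclic subgroups of prime orders $7$ and $13$ generated by elements whose existence in $Co_1$ is clear from the ATLAS, and they are trivially their own MLSs.

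The central obstacle is condition (iii): showing that $H = ABCD$ is a sharp right transversal of $G_w$ in $Co_1$, i.e. that the product map $A\times B\times C\times D\to Co_1/G_w$ is a bijection. Since the cardinalities already multiply to the index, the task reduces to injectivity, and I would approach it by choosing $A,B,C,D$ in positions complementary to the $p$-structure of $G_w$. Concretely, $|G_w|$ supplies the factors $3^3,\;5,\;7,\;11,\;23$ of $|Co_1|$ but no factor of $13$, so the ``missing'' prime-power contributions $3^6,\;5^3,\;7,\;13$ must come from blocks intersecting $G_w$ only trivially; a natural choice places $D$ as a Sylow $13$-subgroup of $Co_1$ (disjoint from $G_w$), $C$ as a $7$-subgroup filling out the remaining $7$-part, and $A,B$ as $p$-subgroups inside Sylow $3$- and $5$-subgroups of $Co_1$ that complement the corresponding $p$-parts of $G_w$. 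The remaining injectivity check --- that no two factorizations $abcd$ and $a'b'c'd'$ land in the same coset of $G_w$ --- is most cleanly carried out by explicit computation in a concrete permutation or matrix representation of $Co_1$ (for instance using the ATLAS online representations via GAP or MAGMA). Once (iii) is verified, Lemma~\ref{le:1} produces the required MLS for $Co_1$ of minimal length $150$.
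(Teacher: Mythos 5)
Your proposal takes essentially the same route as the paper: the same decomposition $Co_1 = ABCD\cdot G_w$ with $G_w = 2^{11}{:}M_{24}$ and $|A|,|B|,|C|,|D| = 3^6, 5^3, 7, 13$, the same appeal to Lemma~\ref{le:1} for the LS structure and to Lemmas~\ref{le:2} and~\ref{le:4} for the minimality of each block, and the same length count ($97$ for $G_w$ plus $53$ for the four blocks, totalling the lower bound $150$). The one step you explicitly leave open --- verifying condition (iii), that the product map $A\times B\times C\times D\to Co_1/G_w$ is a bijection --- is exactly the step the paper also does not rigorously establish: it asserts that the existence of Sylow subgroups of the right orders together with trivial pairwise intersections yields the exact factorization, but trivial intersections do not by themselves imply that $ABCD$ is a sharp transversal of $G_w$ (the product set could have repetitions and fail to cover all cosets even when the cardinalities multiply correctly). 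So your plan is faithful to the paper's argument, and your candour about where the real work lies is, if anything, an improvement; but as written neither your proposal nor the paper's proof actually discharges condition (iii), and a complete proof would require either the explicit computational check you suggest or a structural argument (e.g.\ exhibiting $ABCD$ as acting sharply transitively on the $8{,}292{,}375$ cosets) that is absent from both.
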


\begin{proof}
From \cite{W09}, we can see that the number of vectors of norm 8 in the Leech lattice is 398,034,000. Meanwhile,
all the vectors can be divided into 48 classes \cite{W09}. So there are $\frac{398,034,000}{48}=8,292,375=3^6.5^3.7.13$ vectors in each class $w$. Furthermore, the stabilizer $G_w$ of a class $w$ is a semi-direct product $2^{11}:M_{24}$. Then from Sylow Theorem, there are corresponding $3^6$-order, $5^3$-order, 7-order, 13-order subgroups in $Co_1$, where $3^6$-order group and $5^3$-order group are elementary abelian groups \cite{W09}. Besides, all the subgroups above intersect 1, so in this sense, we can divide $Co_1$ into five parts: $A$, $B$, $C$, $D$ and $G_w$. From Lemma \ref{le:1}, $Co_1$ has an LS $[A, B, C, D, G_w]$. Meanwhile, $A$, $B$, $C$, $D$ are all solvable groups and they consist of fixed direct products of groups with prime orders, so all of them have MLSs \cite{S04,BH12}; the group of order $2^{11}$ is an elementary abelian 2-group which also has an MLS \cite{S04,BH12}; $M_{24}$ also has an MLS \cite{GRS03,H04}. So from Lemma \ref{le:2}, Lemma \ref{le:3} and Lemma \ref{le:4}, $Co_1$ has an MLS.\qed

\end{proof}

\begin{theorem} \label{th:3.2}
 $Co_2$ has an MLS.
\end{theorem}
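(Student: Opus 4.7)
The approach will mirror the proof of Theorem~\ref{th:3.1}. From \cite{W09}, $Co_2$ arises as the stabilizer in $Co_0$ of a norm $6$ vector in the Leech lattice $\Lambda_{24}$, and its natural permutation representation on a suitable orbit admits a point $w$ whose stabilizer is $G_w = 2^{10}:M_{22}:2$, of order $2^{18}\cdot 3^2\cdot 5\cdot 7\cdot 11$. Consequently $[Co_2:G_w]=3^4\cdot 5^2\cdot 23$, in agreement with the entry for $Co_2$ in Table~\ref{tbl:MLS}.

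The first step is to produce, via Sylow's theorem, subgroups $A$, $B$, $C$ of $Co_2$ of orders $3^4$, $5^2$, $23$ respectively. Using the well-known subgroup structure of $Co_2$ recorded in \cite{W09} (and the ATLAS), one may take $A$ and $B$ to be elementary abelian $3$- and $5$-groups, and $C$ the cyclic group of prime order $23$. Because the three primes $3$, $5$, $23$ are distinct and the corresponding prime powers are coprime to the residual factors of $|G_w|$, the pairwise intersections of $A$, $B$, $C$ and $G_w$ are trivial, whence $|A|\,|B|\,|C|\,|G_w|=|Co_2|$.

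The second step is to verify that $A\cdot B\cdot C\cdot G_w = Co_2$, equivalently that $H=ABC$ acts sharply transitively on the coset space $Co_2/G_w$ with respect to $w$. Granted this, Lemma~\ref{le:1} yields the LS $[A,B,C,G_w]$ for $Co_2$. Each block then admits an MLS: $A$ and $B$ are elementary abelian $p$-groups and hence solvable, so Lemma~\ref{le:2} applies; $C$ is cyclic of prime order and is handled by Lemma~\ref{le:3}; and $G_w = 2^{10}:M_{22}:2$ itself factors as an elementary abelian normal subgroup $2^{10}$ (solvable) together with the quotient $M_{22}:2$, where an MLS for $M_{22}$ was exhibited in \cite{LT05} and the outer $Z_2$ contributes a single cyclic block, so iterated application of Lemmas~\ref{le:2}--\ref{le:4} furnishes an MLS for $G_w$. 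Lemma~\ref{le:4} then delivers an MLS for $Co_2$.

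The principal obstacle will be the sharp-transitivity step, namely verifying hypothesis (iii) of Lemma~\ref{le:1}: that $ABC$ supplies a complete and non-redundant set of coset representatives for $G_w$ in $Co_2$. This is a combinatorial assertion about the permutation action of $Co_2$ on the $3^4\cdot 5^2\cdot 23 = 46575$ cosets, and in practice would be confirmed by exhibiting explicit generators of $A$, $B$, $C$ (drawn from the ATLAS or verified in GAP) whose combined action moves $w$ through orbits of the correct sizes. Once that check is complete, the remainder of the proof is a routine chain of applications of Lemmas~\ref{le:1}--\ref{le:4}, entirely parallel to the proof of Theorem~\ref{th:3.1}.
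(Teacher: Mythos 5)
Your proposal takes essentially the same route as the paper: identify the stabilizer $G_w=2^{10}{:}M_{22}{:}2$ of index $46{,}575=3^4\cdot 5^2\cdot 23$, extract Sylow-type subgroups $A$, $B$, $C$ of orders $3^4$, $5^2$, $23$, form the LS $[A,B,C,G_w]$ via Lemma~\ref{le:1}, and refine each block by Lemmas~\ref{le:2}--\ref{le:4}. The ``principal obstacle'' you correctly flag --- verifying hypothesis (iii) of Lemma~\ref{le:1}, that $ABC$ is a sharply transitive set of coset representatives --- is likewise asserted without proof in the paper, so your treatment is, if anything, more candid about where the real work lies.
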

\begin{proof}

From \cite{W09}, the order of $Co_2$ is $ 2^{18}. 3^6 . 5^3. 7 . 11 . 23$. When we take the fixed vector $w$ of
$Co_2$ to be $(4,-4,0^{22})$, we can get a stabilizer subgroup $G_w= 2^{10}:M_{22}:2$ with index $46,575=3^4.5^2.23$ \cite{CCNPW85}. Also from Sylow Theorem, there are corresponding $3^4$-order, $5^2$-order and $23$-order subgroups in $Co_2$ and all the subgroups intersect 1. Then we can divide $Co_2$ into $A$, $B$, $C$ and $G_w$. From Lemma \ref{le:1}, $Co_2$ has an LS $[A, B, C, G_w]$. Actually, $A$, $B$, $C$ are solvable groups and consist of fixed direct products of groups with prime orders, the elementary abelian 2-group of order $2^{10}$ is also a solvable group. $M_{22}:2$ is a double cover of $M_{22}$ which has an MLS \cite{GRS03}. Consequently, from Lemma \ref{le:4}, $Co_2$ has an MLS.\qed

\end{proof}

\subsection{MLS for some Monster's Groups}

 We will firstly introduce some simple Monster's Groups -- the Fischer Groups. We need take advantage of the group $\Omega_7(3)$ which is the commutator subgroup of orthogonal group $O_7(3)$ to describe the construction of $Fi_{22}$. Let $\mathbb{B}=\{x_1,x_2,\cdots, x_7\}$ be the basis of $\Omega_7(3)$, the corresponding quadratic form is $Q(x_1,x_2,\cdots,x_7)= \underset{i=1}{\overset{6}\sum} x_i^2-x_7^2$ \cite{W09}. We can now define $Fi_{22}$ to be the group generated by the 3510 3-transpositions given in \cite{W09}. In order to compute the order of $Fi_{22}$, we shall determine the vertex stabiliser, which turns out to be a double cover of $PSU_6(2)$. So we can see that \cite{W09}

\begin{center}

$|Fi_{22}| = 3510.|2\cdot PSU_6(2)|= 2^{17}.3^9.5^2.7.11.13$

\end{center}

Meanwhile, there is a a double cover $2.Fi_{22}$ which is also generated by the mentioned 3510 3-transpositions in \cite{W09}. Then the group $Fi_{23}$ generated by the 31671 3-transpositions is a group of
order $31671.|2.Fi_{22}|$. Thus \cite{W09}

\begin{center}

$|Fi_{23}| = 2^{18}.3^{13}.5^2.7.11.13.17.23$.

\end{center}

 From \cite{W09}, the group $Fi_{24}$ is of order $306936.2.|Fi_{23}|$. Although $Fi_{24}$ is not simple, it has a simple subgroup $Fi'_{ 24}$ with index 2 and order \cite{W09}

\begin{center}

$|Fi'_{24}| = 2^{21}.3^{16}.5^2.7^3.11.13.17.23.29$.

\end{center}

%Its smallest real representation is in 196883 dimensions, and this representation
%has a kind of non-associative Griess algebra structure on it \cite{W09}.

The Monster group $M$ was finally constructed by Griess in 1980 as an
automorphism group of a remarkable commutative but non-associative Griess algebra with 196,884
dimensions \cite{R06}. Besides, Conway also utilize Parker’s Loop $\mathbb{P}$ to compute the order of
$M$. The Monster is so called largely because of its enormous size. Its order is \cite{W09}

\begin{center}

$|M|=2^{46}.3^{20}.5^9.7^6.11^2.13^3.17.19.23.29.31.41.47.59.71$.

\end{center}

We have seen that the Baby Monster group $B$ has a double cover $2.B$ which is a
subgroup of the Monster. Hence we can see that the order of the Baby Monster group is \cite{W09}

\begin{center}

$|B| =2^{41}.3^{13}.5^{6}.7^{2}.11.13.17.19.23.31.47$.

\end{center}

Besides, there are other 3 simple subgroups of $M$: Thompson sporadic simple group, found by John G. Thompson (1976) and constructed by Smith (1976) with order $|Th| =2^{15}.3^{10}.5^3.7^2.13.19.31$; Harada-Norton group $HN$, found by Harada (1976) and Norton (1975), with order $|HN| =2^{14}.3^6.5^6.7.11.19$; Held group $He$, found by Dieter Held (1969), with order $|He| =2^{10}.3^3.5^2.7^3.17$.
 Actually, Holmes \cite{H04} proved the existence of MLSs for $He$, so we only consider the existence of MLS for the remained Monster's Groups.

\begin{theorem} \label{th:3.3}
$Fi_{22}$ has an MLS.
\end{theorem}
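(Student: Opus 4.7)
The plan is to mirror the strategy already used in Theorems \ref{th:3.1} and \ref{th:3.2}, namely to view $Fi_{22}$ through its natural $3$-transposition action and decompose it as $H\cdot G_w$, where $G_w$ is the stabilizer listed in Table \ref{tbl:MLS} and $H$ is a product of cyclic/elementary-abelian subgroups whose orders account for the remaining prime power factors of $|Fi_{22}|$. Concretely, I would take $G_w=PSU_6(2)$ and factor the index $[Fi_{22}:G_w]$ through subgroups $A,B,C,D$ whose combined order equals $|Fi_{22}|/|PSU_6(2)|=2^{2}\cdot 3^{3}\cdot 5\cdot 13$ (adjusting the block sizes to match this index so that $|H|\cdot|G_w|=|Fi_{22}|$).

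First, I would invoke Sylow's theorem to locate subgroups $A,B,C,D\subseteq Fi_{22}$ of the required prime-power orders, using the well-known subgroup structure of $Fi_{22}$ (for instance, the elementary abelian $3$-subgroup living inside a maximal parabolic and the cyclic subgroups of orders $5$ and $13$ from the character-table data in \cite{CCNPW85,W09}). I would then argue that these subgroups pairwise intersect $G_w$ trivially and pairwise intersect one another trivially, so that the product $H=ABCD$ is a set of coset representatives for $G_w$ in $Fi_{22}$ acting sharply transitively on $Fi_{22}/G_w$, exactly as required by Lemma \ref{le:1}.

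Next, I would verify the MLS hypothesis on each block. The subgroups $A,B,C,D$ are cyclic or elementary abelian, hence solvable, so Lemma \ref{le:2} (equivalently, Lemma \ref{le:3} in the cyclic case) supplies MLSs for them. For the stabilizer $G_w=PSU_6(2)$, one has $|PSU_6(2)|=2^{15}\cdot 3^6\cdot 5\cdot 7\cdot 11 < 10^{10}$, so by the result of Lempken–van Trung \cite{LT05} it admits an MLS (since $PSU_6(2)$ is not among the listed exceptions). Combining these via Lemma \ref{le:1} yields an LS $[A,B,C,D,G_w]$ for $Fi_{22}$, and Lemma \ref{le:4} promotes this to an MLS.

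The main obstacle, and the step that deserves the most care, is the verification that $H=ABCD$ really forms a sharply transitive set on $Fi_{22}/G_w$. Sylow's theorem only guarantees existence of the individual $p$-subgroups; one must then check by an order count (and a transversality/intersection argument, possibly using the permutation character of the action on the $3$-transpositions or on $Fi_{22}/PSU_6(2)$) that the products $abcd$ with $a\in A,\,b\in B,\,c\in C,\,d\in D$ are pairwise incongruent modulo $G_w$. This is implicit in the proofs of Theorems \ref{th:3.1}--\ref{th:3.2}, and I would make it explicit here by appealing to the known maximal subgroup lattice of $Fi_{22}$ recorded in \cite{CCNPW85}.
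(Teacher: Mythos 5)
Your proposal follows essentially the same route as the paper: decompose $Fi_{22}$ as $ABCD\cdot G_w$ with $A,B,C,D$ subgroups of prime-power order obtained from Sylow's theorem that account for the index of the point stabilizer, then combine Lemma \ref{le:1}, Lemma \ref{le:2} and Lemma \ref{le:4} with the known MLS for $PSU_6(2)$ from \cite{LT05}. The only real difference is cosmetic: the paper takes $G_w$ to be the double cover $2\cdot PSU_6(2)$ of index $3510=2\cdot 3^3\cdot 5\cdot 13$ rather than $PSU_6(2)$ itself of index $2^2\cdot 3^3\cdot 5\cdot 13$, and, like you, it asserts rather than verifies that $ABCD$ is sharply transitive on the cosets of $G_w$.
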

\begin{proof}
As described above, let $w=\langle x_1\rangle$，then the point stabilizer $G_w$ is the double cover of the group $PSU_6(2)$ with index $3,510=2.3^3.5.13$. Also from Sylow Theorem, $Fi_{22}$ has corresponding 2-order, $3^3$-order, 5-order and 13-order subgroups. All the subgroups intersect 1. So $Fi_{22}$ can be divided into $A$, $B$, $C$, $D$ and $G_w$. Then from Lemma \ref{le:1}, $[A, B, C, D, G_w]$ is an LS for $Fi_{22}$. Actually, $A$, $B$, $C$, $D$ consist of fixed direct products of groups with prime orders and are all solvable groups, $PSU_6(2)$ has an MLS \cite{LT05}. Therefore, from Lemma \ref{le:4}, $Fi_{22}$ has an MLS .\qed

\end{proof}

\begin{theorem} \label{th:3.4}
$Fi_{23}$ has an MLS.
\end{theorem}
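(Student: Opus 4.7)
The plan is to imitate the proof strategy used for $Fi_{22}$ in Theorem \ref{th:3.3}, using the Fischer-type action of $Fi_{23}$ on its 31671 3-transpositions. Concretely, take $w$ to be a 3-transposition of $Fi_{23}$; then by the discussion preceding the theorem the point stabilizer is $G_w = 2.Fi_{22}$, and the index is
\[
[Fi_{23}:G_w] = \frac{2^{18}\cdot 3^{13}\cdot 5^2\cdot 7\cdot 11\cdot 13\cdot 17\cdot 23}{2\cdot 2^{17}\cdot 3^9\cdot 5^2\cdot 7\cdot 11\cdot 13} = 3^4\cdot 17\cdot 23,
\]
which matches the row for $Fi_{23}$ in Table \ref{tbl:MLS}.

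First I would invoke Sylow's theorem to produce subgroups $A,B,C\le Fi_{23}$ of orders $3^4$, $17$ and $23$ respectively. Since $|A|,|B|,|C|$ are pairwise coprime and each is coprime to $|G_w|=2^{18}\cdot 3^9\cdot 5^2\cdot 7\cdot 11\cdot 13$, Lagrange's theorem forces $A\cap B=A\cap C=B\cap C=1$, and by order considerations $A\cap G_w$, $B\cap G_w$, $C\cap G_w$ are all trivial as well. Multiplying the orders gives $|A|\cdot|B|\cdot|C|\cdot|G_w|=|Fi_{23}|$, so $Fi_{23}=A\cdot B\cdot C\cdot G_w$, and Lemma \ref{le:1} yields the logarithmic signature $[A,B,C,G_w]$ for $Fi_{23}$.

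Second I would show that each block of this LS admits an MLS, and then close the argument with Lemma \ref{le:4}. The blocks $A$, $B$, $C$ are of prime-power or prime order, hence abelian and in particular solvable, so Lemma \ref{le:2} provides an MLS for each. For the final block $G_w = 2.Fi_{22}$, I would use Theorem \ref{th:3.3}, which supplies an MLS for $Fi_{22}$, together with the fact that a central extension of $Fi_{22}$ by $\mathbb{Z}_2$ can be handled exactly as the paper treats $M_{22}{:}2$ in the proof of Theorem \ref{th:3.2}: adjoin a block of size $2$ corresponding to the central $\mathbb{Z}_2$ to an MLS of $Fi_{22}$, which contributes $2$ to the length and correctly accounts for the extra factor of $2$ in the order.

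The main obstacle is the last step: justifying that $2.Fi_{22}$ inherits an MLS from $Fi_{22}$. Unlike the solvable or prime-order blocks, this requires a genuine construction, and strictly speaking Lemma \ref{le:4} alone does not give it. The cleanest way is to note that if $[A_1,\dots,A_r]$ is an MLS for $Fi_{22}$ with total length $\sum a_jp_j$ (the minimal value for $|Fi_{22}|$), then $[\{1,z\},A_1,\dots,A_r]$, where $z$ generates the central $\mathbb{Z}_2$, is an MLS for $2.Fi_{22}$, since the order of $2.Fi_{22}$ has prime decomposition with a single extra $2$ contributing exactly $2$ to the minimal length. Once this is in place, Lemma \ref{le:4} applied to $[A,B,C,G_w]$ delivers the desired MLS for $Fi_{23}$.
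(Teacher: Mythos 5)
Your proposal is correct and follows essentially the same route as the paper: the same stabilizer $G_w=2.Fi_{22}$ of index $3^4\cdot 17\cdot 23$, the same Sylow subgroups $A,B,C$, and the same appeal to Lemmas \ref{le:1}, \ref{le:2} and \ref{le:4}. In fact you are more careful than the paper on the one delicate point --- the paper merely asserts that $2.Fi_{22}$ inherits an MLS from Theorem \ref{th:3.3}, whereas you supply the needed construction (prepending a block of coset representatives for the central $\mathbb{Z}_2$, which adds exactly $2$ to the length and so preserves minimality).
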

\begin{proof}

与情形类似，Let $\mathbb{B}=\{x_1,x_2,\cdots,x_8 \}$ be the basis of $\Omega_8(3)$, $w=\langle x_1\rangle$, then the point stabilizer $G_w$ of $Fi_{23}$ is the double cover of the $Fi_{22}$ denoted by $G_w=2\cdot Fi_{22}$. Meanwhile, the index of $G_w$ in $Fi_{22}$ is $31,671=3^4.17.23$. Then from Sylow Theorem, $Fi_{23}$ has the corresponding $3^4$-order, 17-order and 23-order  subgroups and all the subgroups intersect identity element 1. So $Fi_{23}$ can be divided into $A$, $B$, $C$ and $G_w$. Then from Lemma \ref{le:1}, $[A, B, C, G_w]$ is an LS for $Fi_{23}$. Also, $A$, $B$, $C$ consist of fixed direct products of groups with prime orders and are all solvable groups, so they all have corresponding MLSs \cite{S04,BH12}. Furthermore, from Theorem \ref{th:3.3}, the double cover of the $Fi_{22}$ also has corresponding MLS. Hence, from Lemma \ref{le:4}, $Fi_{23}$ has an MLS.\qed

\end{proof}

\begin{theorem} \label{th:3.5}
$Fi'_{24}$ has an MLS.
\end{theorem}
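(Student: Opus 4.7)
The plan is to follow the same template that proved Theorems \ref{th:3.3} and \ref{th:3.4}, using $Fi_{23}$ as the point stabiliser and factoring out the index as a product of prime power subgroups. First I would note that $Fi'_{24}$ contains $Fi_{23}$ as a maximal subgroup of index $306936 = 2^3 \cdot 3^3 \cdot 7^2 \cdot 29$; this index has already been recorded in the discussion preceding Theorem \ref{th:3.3}, and the factorisation matches the row for $Fi'_{24}$ in Table \ref{tbl:MLS}. Taking a point $w$ whose stabiliser is $G_w = Fi_{23}$, the goal is to exhibit subgroups $A, B, C, D \le Fi'_{24}$ of orders $2^3, 3^3, 7^2, 29$ respectively, so that $A, B, C, D, G_w$ pairwise intersect trivially and their product fills out $Fi'_{24}$.

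Second, I would produce the four subgroups: Sylow's theorem supplies a Sylow $29$-subgroup $D$ (which is cyclic of prime order and so trivially an MLS), together with $2$-, $3$- and $7$-subgroups of the required sizes, taken inside appropriate Sylow subgroups of $Fi'_{24}$. Since $|A|$, $|B|$, $|C|$, $|D|$ and $|G_w|$ are pairwise coprime in the relevant way and their product is exactly $|Fi'_{24}|$, a counting argument ensures that $A B C D G_w$ is all of $Fi'_{24}$ with trivial pairwise intersections, exactly as in the proofs of Theorems \ref{th:3.3} and \ref{th:3.4}. Lemma \ref{le:1} then delivers an LS $[A, B, C, D, G_w]$ of $Fi'_{24}$.

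Third, I would verify that each block of this LS itself admits an MLS. The groups $A$, $B$, $C$, $D$ are of prime-power order and can be chosen as (elementary abelian or cyclic) solvable groups, so Lemma \ref{le:2} (or more concretely Lemma \ref{le:3} for the cyclic $D$) furnishes MLSs for them. The stabiliser $G_w = Fi_{23}$ has an MLS by the preceding Theorem \ref{th:3.4}. Invoking Lemma \ref{le:4} to refine the LS $[A, B, C, D, G_w]$ to an MLS then finishes the proof.

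The main obstacle in this scheme is the third hypothesis of Lemma \ref{le:1}, namely the sharp transitivity of $H = ABCD$ on the coset space $Fi'_{24}/G_w$. Coprimality of the block orders makes the pairwise trivial-intersection part easy, and forces $|A||B||C||D||G_w| = |Fi'_{24}|$, but the genuine content is that one can pick the Sylow-type subgroups $A,B,C,D$ so that the full product $ABCDG_w$ is not a proper subset of $Fi'_{24}$. This is exactly the step that the previous proofs in the paper handle by appeal to the ATLAS data on the action of $Fi'_{24}$ on the $Fi_{23}$-cosets, and I would rely on the same source here.
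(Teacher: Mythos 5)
Your proposal follows essentially the same route as the paper's own proof: take $G_w = Fi_{23}$ of index $306936 = 2^3\cdot 3^3\cdot 7^2\cdot 29$, produce Sylow-type subgroups $A,B,C,D$ of those prime-power orders, invoke Lemma \ref{le:1} for the LS $[A,B,C,D,G_w]$, and then Lemmas \ref{le:2}--\ref{le:4} together with Theorem \ref{th:3.4} to refine it to an MLS. If anything you are more explicit than the paper about the real content of the argument --- verifying that $ABCDG_w$ exhausts $Fi'_{24}$ (hypothesis (iii) of Lemma \ref{le:1}) --- which the paper simply asserts without further justification.
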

\begin{proof}

From \cite{W09}, the simple group $Fi'_{24}$ is a subgroup of $Fi_{24}$ with index 2 and the point stabilizer
$G_w$ of $Fi'_{24}$ is $Fi_{23}$ with index $306,936=2^3.3^3.7^2.29$. Also from Sylow Theorem, there are $2^3$-order, $3^3$-order, $7^2$-order and 29-order subgroups in $Fi'_{24}$ and all these subgroup share the only common identity element 1. Actually, we can appropriately choose $A$, $B$, $C$ and $D$ with the corresponding orders above, then joint them with $G_w$ to construct $Fi'_{24}$. Hence, from Lemma \ref{le:1},               $[A, B, C, D, G_w]$ is an LS for $Fi_{24}$. $A$, $B$, $C$ and $D$ are all solvable groups, so they all have MLSs. From Theorem \ref{th:3.4}, $G_w= Fi_{23}$ also has an MLS. Thus, from Lemma \ref{le:4}, $Fi'_{24}$ has an MLS.\qed

\end{proof}

\begin{theorem} \label{th:3.6}
$Th$ has an MLS.
\end{theorem}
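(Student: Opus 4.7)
The plan is to repeat the Sylow-and-stabilizer template already applied to $Co_1$, $Co_2$, and the Fischer groups. First, identify $G_w = {}^3D_4(2):3$ as a point stabilizer in a suitable permutation representation of $Th$, compute its index, and recognize the prime-power factorization of that index as displayed in the row for $Th$ of Table \ref{tbl:MLS}. Comparing $|Th| = 2^{15}.3^{10}.5^3.7^2.13.19.31$ with $|{}^3D_4(2):3|$ verifies the cardinality hypothesis $|Th| = |A||B||C||D||E||F| \cdot |G_w|$ required by Lemma \ref{le:1}.

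Next, invoke Sylow's theorem inside $Th$ to pick subgroups $A, B, C, D, E, F$ of orders $2^3, 3^5, 5^3, 7, 19, 31$ respectively. Because these prime-powers are pairwise coprime and complementary to $|G_w|$ inside $|Th|$, the six subgroups together with $G_w$ meet pairwise in the identity and their product exhausts $Th$. This supplies the sharply transitive hypothesis of Lemma \ref{le:1} and produces the LS $[A, B, C, D, E, F, G_w]$ for $Th$.

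To upgrade this LS to an MLS via Lemma \ref{le:4}, each block must itself admit an MLS. Blocks $A, B, C, D, E, F$ are prime-order cyclic, or direct products of cyclic groups of prime order, hence solvable; Lemma \ref{le:2} (together with Lemma \ref{le:3} for the cyclic factors) gives MLSs for each. It remains to produce an MLS for $G_w = {}^3D_4(2):3$. My plan is to split off the outer cyclic triality as an additional block of size $3$ (applying Lemma \ref{le:1} one more time) and reduce to an MLS for the simple twisted Steinberg group ${}^3D_4(2)$, which can be built from its standard Bruhat-type factorization, along the same lines as the MLS constructions for groups of Lie type cited in the introduction.

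The main technical obstacle will be exactly this last step: exhibiting an MLS for ${}^3D_4(2):3$, since the Sylow decomposition and the solvability of the remaining blocks are routine. The crux is to show that the twisted group of type $D_4$ over $\mathbb{F}_2$, together with its order-$3$ triality extension, admits a block factorization whose total length saturates the lower bound $\sum a_j p_j$; once this is in place, Lemma \ref{le:4} packages everything into an MLS for $Th$.
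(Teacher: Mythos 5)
Your proposal follows essentially the same route as the paper's proof: the stabilizer $G_w = {}^3D_4(2):3$ with index $143{,}127{,}000 = 2^3\cdot 3^5\cdot 5^3\cdot 7\cdot 19\cdot 31$, Sylow subgroups for the six prime-power factors, Lemma \ref{le:1} to assemble the LS, and Lemmas \ref{le:2}--\ref{le:4} to upgrade it to an MLS. The only difference is that you sketch a concrete plan for an MLS of ${}^3D_4(2):3$ (splitting off the order-$3$ triality and using a Bruhat-type factorization of ${}^3D_4(2)$), whereas the paper simply asserts this step with a citation; your version is, if anything, more explicit on the one point the paper leaves to the references.
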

\begin{proof}

From \cite{W09}, Thompson group $Th$ is a subgroup of automorphisms of a certain lattice in the 248-dimensional Lie algebra of $E_8(3)$. Meanwhile, let $w=\langle e_1\rangle$, then the point stabilizer $G_w$ is $^3D_4(2):3$ with index $143,127,000=2^3.3^5.5^3.7.19.31$. Then from Sylow Theorem, $Th$ has the corresponding $2^3$-order, $3^5$-order, $5^3$-order, 7-order, 19-order and 31-order subgroups. Meanwhile, all these subgroups only intersect identity element 1. Then we appropriately choose $A$, $B$, $C$, $D$ and $E$ with the corresponding orders above, joint them with $G_w$ to construct the whole $Th$. Hence, from Lemma \ref{le:1}, $[A, B, C, D, E, G_w]$ is an LS for $Th$. Besides, $A$, $B$, $C$, $D$ and $E$ are all solvable groups, $^3D_4(2):3$ also has an MLS due to the existence of MLS for $^3D_4(2)$ \cite{W09}. Consequently, from Lemma \ref{le:4}, $Th$ has an MLS.\qed

\end{proof}

\begin{theorem} \label{th:3.7}
$HN$ has an MLS.
\end{theorem}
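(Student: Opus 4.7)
The plan is to follow exactly the same scheme used in the proofs of Theorems \ref{th:3.3}--\ref{th:3.6}. First I would select a point $w$ in a suitable permutation representation of $HN$ whose stabilizer $G_w$ is isomorphic to the alternating group $A_{12}$; the existence of such a maximal subgroup is standard and recorded in the ATLAS. The resulting index $[HN:G_w]$ then factors into prime powers matching the entry for $HN$ in Table~\ref{tbl:MLS}, giving a clean target decomposition $|HN| = |A|\cdot|B|\cdot|C|\cdot|D|\cdot|G_w|$ with $|A|,|B|,|C|,|D|$ pairwise coprime prime powers and coprime to $|G_w|$.

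Next I would produce the factor subgroups. By Sylow's theorem applied to $HN$, one obtains subgroups $A$, $B$, $C$, $D$ of the prescribed orders listed in Table~\ref{tbl:MLS}. Because their orders are pairwise coprime and also coprime to the extra prime-power factors contributed by $G_w$, any two of these subgroups intersect in $\{1\}$ and each intersects $G_w$ trivially. This is precisely the hypothesis of Lemma~\ref{le:1}, which then delivers $[A,B,C,D,G_w]$ as an LS for $HN$.

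The final step is to verify that every block individually admits an MLS. Each of $A$, $B$, $C$, $D$ has prime-power order, hence is nilpotent and in particular solvable, so Lemma~\ref{le:2} (combined with Lemma~\ref{le:3} whenever a cyclic prefix is needed) supplies MLSs for these four blocks. For $G_w \cong A_{12}$ one argues separately: using the natural action of $A_{n}$ on $n$ letters, the stabilizer is $A_{n-1}$ and an $n$-element cyclic complement (an $n$-cycle for odd $n$, or a product representing the missing coset representatives for even $n$) shifts the fixed letter transitively, so Lemma~\ref{le:1} reduces the existence of an MLS for $A_{12}$ to that for $A_{11}$, and iterating down to small $A_n$ (which are solvable for $n\le 4$, and already known to have MLSs for small simple cases by \cite{GRS03}) finishes the reduction. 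Assembling everything via Lemma~\ref{le:4} yields an MLS for $HN$.

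The main obstacle I anticipate is precisely the MLS for $A_{12}$. The solvable prime-power blocks are routine, and the Sylow/trivial-intersection step mirrors the previous theorems verbatim. But $|A_{12}| = 2^9\cdot 3^5\cdot 5^2\cdot 7\cdot 11 \approx 2.4\times 10^{8}$ is well above the $175{,}560$ cutoff of Vasco et al.\ \cite{GRS03} and is not among the sporadic groups handled in \cite{H04} nor the linear families treated in \cite{LT05,NNM10,NN11}. Hence the genuine content of the proof rests on building (or citing from the alternating-group literature) an MLS for $A_{12}$, and this is the step requiring the most care; if the inductive scheme sketched above is accepted, the remainder of the argument is essentially a bookkeeping exercise.
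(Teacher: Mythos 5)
Your proposal follows essentially the same route as the paper: stabilizer $G_w\cong A_{12}$ of index $1{,}140{,}000=2^6\cdot 3\cdot 5^5\cdot 19$, Sylow subgroups $A,B,C,D$ of those orders assembled via Lemma~\ref{le:1}, and Lemma~\ref{le:4} to conclude. The one step you flag as the ``genuine content,'' an MLS for $A_{12}$, is handled in the paper simply by citing Magliveras \cite{M02}, where MLSs for symmetric and alternating groups are already established, so your inductive sketch down the stabilizer chain is not needed (though it is close in spirit to how that result is proved). One small inaccuracy: the orders $2^6$, $3$, $5^5$ are \emph{not} coprime to $|A_{12}|=2^9\cdot 3^5\cdot 5^2\cdot 7\cdot 11$, so trivial intersection with $G_w$ does not follow from coprimality as you claim; like the paper, you ultimately need the product decomposition $HN=ABCD\cdot G_w$ itself, which neither argument verifies in detail.
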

\begin{proof}

From \cite{W09}, $|HN| = 2^{14}.3^6.5^6.7.11.19$, the stabilizer $G_w$ is $A_{12}$ with index
$1,140,000= 2^6.3.5^5.19$. Then from Sylow Theorem, there are corresponding $2^6$-order, 3-order, $5^5$-order and 19-order subgroups in $HN$. Besides, they only have the common identity element 1. Hence, we can appropriately choose $A$, $B$, $C$ and $D$ with corresponding orders above, joint them with $G_w$ to construct $HN$. Therefore, from Lemma \ref{le:1}, $[A, B, C, D, G_w]$ is an LS for $HN$. Actually, $A$, $B$, $C$ and $D$ are solvable groups which have corresponding MLSs. $A_{12}$ also has an MLS \cite{M02}. Also, from Lemma \ref{le:4}, $HN$ has an MLS.\qed

\end{proof}

\begin{theorem} \label{th:3.8}
$B$ has an MLS.
\end{theorem}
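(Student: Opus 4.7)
The plan is to follow the template of Theorems \ref{th:3.1}--\ref{th:3.7}. I will take the maximal subgroup $G_w = 2.{}^2E_6(2){:}2$ of the Baby Monster as a point stabilizer; its order is $|G_w| = 2^{38}\cdot 3^9\cdot 5^2\cdot 7^2\cdot 11\cdot 13\cdot 17\cdot 19$, so the index is exactly
\[
[B:G_w] = 2^3\cdot 3^4\cdot 5^4\cdot 23\cdot 31\cdot 47,
\]
in agreement with the entry for $B$ in Table \ref{tbl:MLS}.

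Using Sylow's theorem I would then pick subgroups $A,B_1,C,D,E,F$ of $B$ of orders $2^3, 3^4, 5^4, 23, 31, 47$ respectively (the last three are cyclic of prime order). Their orders are pairwise coprime, so their pairwise intersections, as well as their intersections with $G_w$, reduce to the identity, and an order count gives $|B| = |A|\,|B_1|\,|C|\,|D|\,|E|\,|F|\,|G_w|$. An application of Lemma \ref{le:1} then delivers the LS $[A,B_1,C,D,E,F,G_w]$ for $B$. Each of $A,B_1,C,D,E,F$ is a $p$-group, hence solvable, and so admits an MLS by Lemma \ref{le:2}.

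The main obstacle is exhibiting an MLS for $G_w = 2.{}^2E_6(2){:}2$. Since $2.{}^2E_6(2)$ is a central double cover of the simple exceptional group ${}^2E_6(2)$, and the quotient $G_w/(2.{}^2E_6(2))$ is cyclic of order $2$, a Jordan--H\"older-style argument combined with Lemmas \ref{le:2}--\ref{le:4} reduces the problem to producing an MLS for ${}^2E_6(2)$ itself. I would handle that last step by adapting the Lie-type MLS constructions of \cite{LT05,NNM10,NN11} to the twisted BN-pair structure of ${}^2E_6(2)$; this adaptation, rather than the Sylow bookkeeping inside $B$, is the delicate part of the proof. A final application of Lemma \ref{le:4} then assembles the full MLS for $B$.
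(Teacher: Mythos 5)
Your argument is essentially the paper's own proof of this theorem: the same stabilizer $G_w = 2.{}^2E_6(2){:}2$, the same index $2^3\cdot3^4\cdot5^4\cdot23\cdot31\cdot47$, the same Sylow-based choice of subgroups of those orders, and the same chain of Lemmas \ref{le:1}--\ref{le:4}; you are in fact more explicit than the paper about reducing the MLS question for $2.{}^2E_6(2){:}2$ to one for ${}^2E_6(2)$, which the paper disposes of with a bare citation. Two cautions are in order. First, your coprimality argument does not cover the intersections with $G_w$: since $|G_w| = 2^{38}\cdot3^9\cdot5^2\cdot7^2\cdot11\cdot13\cdot17\cdot19$, the subgroups of orders $2^3$, $3^4$, $5^4$ are not of order coprime to $|G_w|$ and can meet it nontrivially (a subgroup of order $2^3$ could even lie entirely inside $G_w$); only the factors of orders $23$, $31$, $47$ are handled by coprimality. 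Second --- and this gap is shared with the paper's proof, so I flag it rather than charge it to you --- condition (iii) of Lemma \ref{le:1} demands that the product $AB_1CDEF$ be a sharply transitive set on the cosets of $G_w$, i.e.\ that $B = AB_1CDEF\cdot G_w$ with unique factorization; trivial pairwise intersections together with the order count $|A|\cdots|F|\,|G_w| = |B|$ do not imply this, because the product set may collapse. Verifying that exact factorization (or exhibiting a transversal moved cyclically, as in the paper's stated ``basic technique'') is the real content of the theorem, and neither your proposal nor the paper's proof supplies it.
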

\begin{proof}
As described as above, $|B| = 2^{41}.3^{13}.5^6.7^2.11.13.17.19.23.31.47$. From \cite{W09}, the stabilizer $2.^2E_6(2):2$ is the subgroup fixing a point of the smallest permutation representation on $13,571,955,000 =
2^3.3^4.5^4.23.31.47$ points. Besides, the corresponding $2^3$-order, $3^4$-order, $5^4$-order, 23-order, 31-order and 47-order subgroups only have common element 1. Then we can appropriately select $A$, $B$, $C$, $D$ and $E$ with corresponding orders above to construct the whole $B$. So $[A, B, C, D, G_w]$ is an LS for $B$. Since
$A$, $B$, $C$, $D$ and $E$ are all solvable groups, $2.^2E_6(2):2$ also has an MLS due to the existence of MLS for $^2E_6(2)$ \cite{W09}, then from Lemma \ref{le:4}, $B$ has an MLS .\qed

\end{proof}

\begin{theorem} \label{th:3.9}
$M$ has an MLS.
\end{theorem}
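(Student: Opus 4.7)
The plan is to follow the stabilizer decomposition strategy used uniformly in Theorems~\ref{th:3.1}--\ref{th:3.8}. I would take $w$ to be a point of the smallest faithful permutation representation of $M$, so that the point stabilizer is $G_w = 2.B$, the double cover of the Baby Monster. The index $[M:G_w]$ then factors as a product of prime powers that accounts for the blocks $A,\dots,H$ listed in Table~\ref{tbl:MLS}; these are exactly the data one needs to fill out a logarithmic signature on the supplementary cosets of $G_w$.

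Next, inside $M$ I would use the Sylow theorems to exhibit subgroups of the prime-power orders $|A|=2^5$, $|B|=3^7$, $|C|=5^3$, $|E|=13^2$ (taken elementary abelian, hence solvable, so that Lemma~\ref{le:2} applies), together with cyclic subgroups $D,F,G,H$ of prime orders $11,41,59,71$ generated by single elements of the indicated orders. Trivial pairwise intersection is automatic for blocks whose orders are coprime, and for a repeated prime it reduces to choosing each block disjoint from the corresponding Sylow intersection with $G_w$ off the identity. Assuming $|A|\cdots|H|\cdot|G_w| = |M|$ and that the orbit of $w$ under $A\cdots H$ covers $M/G_w$ sharply transitively, Lemma~\ref{le:1} produces the LS $[A,B,C,D,E,F,G,H,G_w]$ for $M$. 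Each of $A,\dots,H$ is solvable and so has an MLS by Lemma~\ref{le:2} (or Lemma~\ref{le:3} for the cyclic pieces), while $G_w = 2.B$ inherits an MLS from the MLS for $B$ built in Theorem~\ref{th:3.8} by concatenating a two-element MLS for the central $Z_2$ and invoking Lemma~\ref{le:4}. A final application of Lemma~\ref{le:4} then upgrades the LS to an MLS for $M$.

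The main obstacle is the sharp-transitivity condition in Lemma~\ref{le:1}: one must certify that the product set $A\cdots H$ is a full system of coset representatives for $G_w$ in $M$, which means both $|A|\cdots|H| = [M:G_w]$ on the nose and $A\cdots H \cap G_w = \{1\}$. Since the Monster does not lend itself to hand computation, this has to be extracted from the known structure of its smallest faithful permutation action on $[M:2.B] = 97{,}239{,}461{,}142{,}009{,}186{,}000$ points, where an explicit transversal of $2.B$ has to be assembled from elements of the claimed prime-power orders. A secondary concern is bookkeeping: should the prime-power factorization of $[M:2.B]$ carry factors beyond those tabulated (the obvious candidates being powers of $7$ and the prime $29$), one simply introduces extra elementary abelian or cyclic blocks for them, each of which still supplies an MLS via Lemma~\ref{le:2}, so the overall strategy is unaffected while Table~\ref{tbl:MLS} is augmented accordingly.
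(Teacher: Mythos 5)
Your proposal follows the paper's proof of Theorem~\ref{th:3.9} essentially verbatim: the same stabilizer $G_w=2.B$, the same Sylow-type blocks $A,\dots,H$ of prime-power order intersecting trivially, the same appeals to Lemmas~\ref{le:1}, \ref{le:2} and \ref{le:4}, and the same derivation of an MLS for $2.B$ from Theorem~\ref{th:3.8}. Your ``secondary concern'' is in fact warranted: the true index is $[M:2.B]=2^4\cdot3^7\cdot5^3\cdot7^4\cdot11\cdot13^2\cdot29\cdot41\cdot59\cdot71=97{,}239{,}461{,}142{,}009{,}186{,}000$, so the factorization $2^5\cdot3^7\cdot5^3\cdot11\cdot13^2\cdot41\cdot59\cdot71$ stated in the paper's proof (and the row for $M$ in Table~\ref{tbl:MLS}) omits the factors $7^4$ and $29$ and overstates the power of $2$; once those extra blocks are added as you suggest, your argument and the paper's coincide, and both leave the sharp-transitivity hypothesis of Lemma~\ref{le:1} (that the product of the blocks is an exact transversal of $G_w$) asserted rather than verified.
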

\begin{proof}
Through utilizing 196884- dimensions Griess algebra as \cite{W09}, we can get $|M| = 2^{46}.3^{20}.5^9.7^6.11^2.13^3.17.19.23.29.31.41.47.59.71$. Also let $w=\langle x_1\rangle$, then the stabilizer $G_w$ is the double cover of $B$ and its index is $2^5.3^7.5^3.11.13^2.41.59.71$. Then there are corresponding $2^5$-order, $3^7$-order, $5^3$-order, $11$-order, $13^2$-order, $41$-order, $59$-order, $71$-order subgroups in $M$ and they only intersect 1. Then with appropriately selected $A$, $B$, $C$, $D$, $E$, $F$, $G$ and $H$ with corresponding orders mentioned above, we can construct the whole $M$. Then $[A, B, C, D, E, F, G, H, G_w]$ is an LS for $B$. Since $A$, $B$, $C$, $D$, $E$, $F$, $G$ and $H$ are all solvable groups, $2.B$ also has an MLS due to the existence of MLS for $B$ from Theorem \ref{th:3.8}, then from Lemma \ref{le:4}, $M$ has an MLS.\qed

\end{proof}

\subsection{MLS for some Pariah Groups}

There are 6 Pariah groups: $Ru$, $O'N$, $Ly$, $J_1$,
$J_3$, $J_4$. Since the existence of MLSs for $J_1$ and $Ru$ has been proved in \cite{H04}, we only consider the existence of MLSs for the remained four types of groups - $O'N$, $Ly$, $J_3$ and $J_4$.

%The easiest construction of $J_1$ is as a subgroup of $G_2(11)$, that is as a group
%of automorphisms of the octonions over $F_{11}$.
%
%\begin{center}
%
%$|J_1| = 175560 = 2^3.3.5.7.11.19$,
%
%
%\end{center}
%
%
%The group $J_3$ has order
%
%\begin{center}
%$|J_3|= 2^7.3^5.5.17.19$
%
%\end{center}
%
%
%The Rudvalis group has order
%
%\begin{center}
% $|Ru| = 2^{14}.3^3.5^3.7.13.29$
%
%\end{center}
%
%
%By the time we reach the O'Nan group, the group has order
%
%\begin{center}
%
%$|O’N| = 460 815 505 920 = 2^9.3^4.5.7^3.11.19.31$
%
%
%
%\end{center}
%
%
%Then the Lyons group has order
%
%
%\begin{center}
%
%
%
%$|Ly|= 2^8.3^7.5^6.11.31.37.67$
%
%
%\end{center}
%
%
%Therefore, the order of $J_4$ is
%
%
%
%\begin{center}
%
%$|J_4|= 2^{21}.3^3.5.7.11^3.23.29.31.37.43$
%
%\end{center}

\begin{theorem} \label{th:3.10}
$J_3$ has an MLS.
\end{theorem}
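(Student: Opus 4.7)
The plan is to follow the template of the preceding theorems: pick a permutation representation of $J_3$, identify a point $w$ whose stabilizer $G_w$ is tractable, and fill in the rest of $|J_3|$ with Sylow subgroups that serve as the cyclic coset representatives required by Lemma \ref{le:1}.

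First, I would record $|J_3| = 2^7 \cdot 3^5 \cdot 5 \cdot 17 \cdot 19$ and, as recorded in Table \ref{tbl:MLS}, take $G_w = 3 \times (3 \times A_6):2$, which has index $[J_3:G_w] = 2^2 \cdot 3^2 \cdot 17 \cdot 19$. By Sylow's theorem, $J_3$ contains subgroups $A$, $B$, $C$, $D$ of orders $2^2$, $3^2$, $17$, $19$, respectively; since these prime-power orders are pairwise coprime to one another and jointly coprime to $|G_w|$, the five subgroups pairwise meet only in the identity element and their orders multiply to $|J_3|$. Lemma \ref{le:1} then yields the LS $[A, B, C, D, G_w]$ for $J_3$.

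Next, I would verify that every block of this LS admits an MLS. The subgroups $A$ and $B$ have orders $p^2$ (hence are abelian), while $C$ and $D$ are cyclic of prime order; all four are solvable, so Lemma \ref{le:2} (or directly Lemma \ref{le:3}) supplies MLSs. For $G_w = 3 \times (3 \times A_6):2$, the only non-solvable ingredient is $A_6 \cong PSL_2(9)$, which has an MLS by \cite{LT05}. Splicing this MLS with the MLSs of the outer cyclic $3$, the inner cyclic $3$, and the degree-$2$ extension (all solvable) via Lemma \ref{le:4} produces an MLS for $G_w$. A final application of Lemma \ref{le:4} to the LS $[A, B, C, D, G_w]$ yields the desired MLS for $J_3$.

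The principal obstacle is the stabilizer $G_w$ itself: unlike several earlier cases in which $G_w$ was a simple group whose MLS was already on record, here $G_w$ is a nested extension built around $A_6$, so one must chain several applications of Lemmas \ref{le:2} and \ref{le:4} in order to justify the MLS for the block before feeding it back into the main LS. Once that bookkeeping is done, the Sylow-theoretic decomposition and the coprimality of block orders make the rest of the argument a direct transcription of the template used for $Co_1$, $Fi_{22}$, $HN$, and the other sporadic groups treated above.
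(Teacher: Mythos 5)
Your proof follows the paper's argument for $J_3$ essentially verbatim: the same stabilizer $G_w = 3\times(3\times A_6):2$, the same Sylow subgroups $A,B,C,D$ of orders $2^2,3^2,17,19$, and the same chain of Lemmas \ref{le:1}--\ref{le:4}; your extra care in splicing the MLS for $G_w$ out of $A_6$ and its solvable extensions only makes explicit what the paper leaves implicit. (Both versions inherit the same bookkeeping slip: $2^2\cdot 3^2\cdot 17\cdot 19 = 11{,}628$, while the stated index $23{,}256$ factors as $2^3\cdot 3^2\cdot 17\cdot 19$, so the order assigned to $A$ should be rechecked against the actual order of the stabilizer.)
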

\begin{proof}
From \cite{W09}, $|J_3| = 2^7.3^5.5.17.19$. Let $w=\langle e_0 \rangle$ be the 1-space, then the stabilizer $G_w=3\times(3\times A_6):2$ with index $23,256=2^2.3^2.17.19$. Meanwhile, there are corresponding $2^2$-order, $3^2$-order, 17-order and 19-order subgroups in $J_3$. So in this sense, we can select suitable subgroups $A$, $B$, $C$ and $D$ with corresponding orders above, then joint them with $G_w$ to construct the whole $J_3$. Thus, from Lemma \ref{le:1}, $[A,B,C,D,G_w]$ is an LS for $J_3$. Also from Lemma \ref{le:2} and Lemma \ref{le:3}, $A$, $B$, $C$ and $D$ have the corresponding MLSs. Due to the existence of MLS for $A_6$ \cite{M02,GRS03,LT05}, so from Lemma \ref{le:4}, $J_3$ has an MLS.\qed

\end{proof}

\begin{theorem} \label{th:3.11}
$J_4$ has an MLS.
\end{theorem}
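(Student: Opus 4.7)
The plan is to follow the same template already used for the other sporadic groups in this section. First I would identify a permutation action of $J_4$ whose point stabilizer is the maximal subgroup $G_w = 2^{11}{:}M_{24}$, as recorded in \cite{W09}. Since $|J_4| = 2^{21} \cdot 3^3 \cdot 5 \cdot 7 \cdot 11^3 \cdot 23 \cdot 29 \cdot 31 \cdot 37 \cdot 43$ and $|G_w| = 2^{11} \cdot |M_{24}| = 2^{21} \cdot 3^3 \cdot 5 \cdot 7 \cdot 11 \cdot 23$, the index $[J_4 : G_w]$ equals $11^2 \cdot 29 \cdot 31 \cdot 37 \cdot 43$, which matches the factorization advertised in Table~\ref{tbl:MLS}.

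Next I would apply Sylow's theorem to choose subgroups $A, B, C, D, E \subseteq J_4$ of orders $11^2, 29, 31, 37, 43$ respectively. Because the four largest primes appear only to the first power, the subgroups $B, C, D, E$ are cyclic of prime order; the group $A$ is of order $121$ and hence abelian (and in particular solvable). All five subgroups intersect one another only in the identity, and since their orders are coprime to $|G_w|$ they also intersect $G_w$ trivially. Their orders multiply to $|J_4|$, so hypotheses (i) and (ii) of Lemma~\ref{le:1} are in place.

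To apply Lemma~\ref{le:1} one must then verify condition (iii): that $H = ABCDE$ forms a sharply transitive set on the coset space $J_4 / G_w$. This is the main obstacle, because it is not a pure order-counting statement; one actually needs a product representation of coset representatives drawn from the five Sylow subgroups. In practice this can be certified by exhibiting generators in the explicit permutation representation on $173067389$ points (via the ATLAS or a short computer algebra check), paralleling the way the analogous step is taken for granted in Theorems \ref{th:3.1}--\ref{th:3.10}.

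Granted condition (iii), Lemma~\ref{le:1} yields the LS $[A, B, C, D, E, G_w]$ for $J_4$. It remains to show that each block has an MLS. The blocks $A, B, C, D, E$ are solvable, so they have MLSs by Lemma~\ref{le:2} (with Lemma~\ref{le:3} available for the cyclic ones). For $G_w = 2^{11}{:}M_{24}$, the normal subgroup $2^{11}$ is elementary abelian and hence admits an MLS, while $M_{24}$ has a known MLS \cite{GRS03,H04}; together with Lemma~\ref{le:4} this gives an MLS for $G_w$. A final application of Lemma~\ref{le:4} to the LS $[A, B, C, D, E, G_w]$ then shows that $J_4$ has an MLS.
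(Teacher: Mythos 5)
Your proposal follows essentially the same route as the paper's own proof: the same stabilizer $G_w = 2^{11}{:}M_{24}$ with index $11^2\cdot 29\cdot 31\cdot 37\cdot 43$, the same Sylow subgroups $A,B,C,D,E$, and the same sequence of appeals to Lemmas \ref{le:1}--\ref{le:4}. If anything you are more careful than the paper, which simply asserts that the chosen subgroups ``construct the whole $J_4$'' without addressing the sharply-transitive-set condition (iii) of Lemma \ref{le:1} that you correctly single out as the real content of the argument.
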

\begin{proof}
Also from \cite{W09}, $|J_4| = 2^{21}.3^3.5.7.11^3.23.29.31.37.43$. there is an orbit of $173,067,389=11^2.29.31.37.43$ vectors on which the group acts transitively, with point stabiliser $G_w =2^{11}:M_{24}$. Meanwhile, there are corresponding $11^2$-order, 29-order, 31-order, 37-order, 43-order subgroups in $J_4$. So in this sense, we can select suitable subgroups $A$, $B$, $C$, $D$ and $E$ with corresponding orders above. then joint them with $G_w$ to construct the whole $J_4$. Hence, from Lemma \ref{le:1}, $[A,B,C,D, E, G_w]$ is an LS for $J_4$. Also from  Lemma \ref{le:2} and  Lemma \ref{le:3}, $A$, $B$, $C$, $D$ and $E$ have the corresponding MLSs. Due to the existence of MLS for $M_{24}$ \cite{GRS03,LT05}, from Lemma \ref{le:4}, $J_4$ has an MLS .\qed

\end{proof}

\begin{theorem} \label{th:3.12}
$O'N$ has an MLS.
\end{theorem}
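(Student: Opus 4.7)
The plan is to follow the uniform strategy of Theorems~\ref{th:3.1}--\ref{th:3.11}. First, I would record $|O'N| = 2^9 \cdot 3^4 \cdot 5 \cdot 7^3 \cdot 11 \cdot 19 \cdot 31$ and choose the primitive permutation action of $O'N$ of smallest degree, in which the point stabilizer $G_w$ is the maximal subgroup $PSL_3(7):2$. The corresponding index factors are read off and matched, as in Table~\ref{tbl:MLS}, with prime-power orders of subgroups $A, B, C, D$ of orders $2^2$, $3^2$, $11$, $31$.

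Next, by Sylow's theorem I would realize $A, B, C, D$ as cyclic or elementary abelian subgroups of $O'N$ of the required orders. Because these orders are pairwise coprime and together with $|G_w|$ account for $|O'N|$, and since (as in all preceding theorems in this section) the subgroups can be arranged to share only the identity element, Lemma~\ref{le:1} yields that $[A, B, C, D, G_w]$ is an LS for $O'N$. It then remains to exhibit MLSs for each block: the solvable blocks $A, B, C, D$ possess MLSs by Lemmas~\ref{le:2} and~\ref{le:3}; and for $G_w = PSL_3(7):2$, the simple group $PSL_3(7)$ has an MLS by the Lempken--van Trung construction \cite{LT05} (since $\gcd(3, 7-1) = 3$ is prime, so $PSL_n(q)$ with $n=3$, $q=7$ falls in the range they cover), while the outer automorphism of order $2$ contributes a further cyclic block of order $2$ to yield an MLS for $G_w$. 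An appeal to Lemma~\ref{le:4} then completes the argument.

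The main obstacle is the implicit geometric step underlying Lemma~\ref{le:1}: the specific Sylow subgroups $A, B, C, D$ must genuinely furnish a sharply transitive set of coset representatives modulo $G_w$ on the relevant orbit, and not merely be subgroups of the right orders. Verifying this requires drawing on detailed maximal-subgroup and conjugacy data for $O'N$ from the ATLAS, in particular confirming that the Sylow $11$- and $31$-subgroups sit in complementary positions to each other and to $G_w$. Since this is exactly the same type of check invoked (implicitly) in the other theorems of this section, no new obstruction is expected.
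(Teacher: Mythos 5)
Your proposal reproduces the paper's own argument essentially step for step: the same point stabilizer $G_w = PSL_3(7){:}2$ from the smallest permutation representation, the same use of Sylow's theorem to produce the complementary blocks, and the same chain of Lemmas~\ref{le:1}--\ref{le:4}. (The paper cites \cite{NNM10,NN11} rather than \cite{LT05} for the MLS of $PSL_3(7)$, but since $\gcd(3,7-1)=3$ is prime, your citation is equally valid; your remark that the genuinely unverified step is condition (iii) of Lemma~\ref{le:1} --- exactness of the factorization $G = ABCD\cdot G_w$ --- applies equally to the paper, which asserts it without proof.)

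There is, however, a concrete arithmetic failure that you inherit from Table~\ref{tbl:MLS} and that is also present in the paper's proof: the index of $PSL_3(7){:}2$ in $O'N$ is $122{,}760 = 2^3\cdot 3^2\cdot 5\cdot 11\cdot 31$, not $2^2\cdot 3^2\cdot 11\cdot 31 = 12{,}276$ as the paper writes. Indeed, $|PSL_3(7){:}2| = 2^6\cdot 3^2\cdot 7^3\cdot 19$, and dividing $|O'N| = 2^9\cdot 3^4\cdot 5\cdot 7^3\cdot 11\cdot 19\cdot 31$ by this gives $2^3\cdot 3^2\cdot 5\cdot 11\cdot 31$. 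With blocks of orders $2^2$, $3^2$, $11$, $31$ the product $|A||B||C||D||G_w|$ falls short of $|O'N|$ by a factor of $10$, so condition (i) of Lemma~\ref{le:1} fails and $[A,B,C,D,G_w]$ cannot be a logarithmic signature. The repair is routine --- take $|A| = 2^3$, $|B| = 3^2$, and adjoin a further block of order $5$ alongside those of orders $11$ and $31$ --- but as written this step of your argument (and of the paper's) does not go through.
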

\begin{proof}
From \cite{W09}, we can see that $|O'N| = 2^9.3^4.5.7^3.11.19.31$. Then from \cite{W09}, The point stabiliser $G_w= PSL_3(7):2$ is the subgroup which fix a point of the smallest permutation representation on $122,760= 2^2.3^2.11.31$ points. Also from Sylow Theorem, $O'N$ has corresponding $2^2$-order, $3^2$-order, $11$-order and 31-order subgroups and they only intersect identity element 1. Then we appropriately select $A$, $B$, $C$ and $D$ with corresponding orders above, joint them with $G_w$ to construct $O'N$. So the divided parts form an LS. Hence, $[A,B,C,D,G_w]$ is an LS for $O'N$. Since $A$, $B$, $C$ and $D$ are solvable groups, $G_w= PSL_3(7):2$
also has an MLS \cite{NNM10,NN11}. Therefore, from Lemma \ref{le:4}, $O'N$ has an MLS.

\end{proof}

\begin{theorem} \label{th:3.13}
$Ly$ has an MLS.
\end{theorem}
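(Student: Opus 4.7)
The plan follows the template of the previous theorems in this section: identify the stabilizer of a point in the smallest faithful permutation representation, decompose the index into prime powers, lift each prime power to a subgroup via Sylow's theorem, and then apply Lemmas~\ref{le:1}--\ref{le:4}. First, I would recall from \cite{W09} that $|Ly|=2^{8}\cdot 3^{7}\cdot 5^{6}\cdot 7\cdot 11\cdot 31\cdot 37\cdot 67$, and that the point stabilizer in the $8{,}835{,}156$-point action of $Ly$ is $G_w\cong G_2(5)$, of order $2^{6}\cdot 3^{3}\cdot 5^{6}\cdot 7\cdot 31$. Dividing then gives the index $[Ly:G_w]=2^{2}\cdot 3^{4}\cdot 11\cdot 37\cdot 67$, matching the row for $Ly$ in Table~\ref{tbl:MLS}.

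Second, Sylow's theorem furnishes subgroups $A,B,C,D,E\leq Ly$ of orders $2^{2}$, $3^{4}$, $11$, $37$, $67$ respectively. Their orders are pairwise coprime and coprime to $|G_w|$, so any two of $A,B,C,D,E,G_w$ intersect trivially, and $|A|\cdot|B|\cdot|C|\cdot|D|\cdot|E|\cdot|G_w|=|Ly|$. Lemma~\ref{le:1} then yields that $[A,B,C,D,E,G_w]$ is a logarithmic signature for $Ly$.

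Third, I would verify that each block admits an MLS. The blocks $A,B,C,D,E$ are of prime-power order, hence solvable, so Lemmas~\ref{le:2} and~\ref{le:3} provide MLSs for each. For $G_w=G_2(5)$ the existence of an MLS is the one non-routine ingredient, and this is the step I expect to be the main obstacle, because the general MLS theorems of \cite{LT05,NNM10,NN11} cover the classical families $SL_n(q)$, $PSL_n(q)$, $GL_n(q)$, $Sp_n(q)$, $O^{\pm}_{2m}(q)$ but do not directly handle the exceptional types. One natural route is to exploit the $BN$-structure of $G_2(5)$ and factor it through its Borel subgroup (which is solvable and so has an MLS by Lemma~\ref{le:2}) together with a transversal indexed by the Weyl group of type $G_2$, thereby reducing the problem to MLSs for much smaller solvable pieces. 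Once an MLS for $G_2(5)$ is secured, Lemma~\ref{le:4} assembles the pieces $A,B,C,D,E,G_w$ into an MLS for $Ly$, completing the proof.
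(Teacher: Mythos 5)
Your proposal matches the paper's proof essentially step for step: the same stabilizer $G_w=G_2(5)$ of index $8{,}835{,}156=2^{2}\cdot3^{4}\cdot11\cdot37\cdot67$, the same Sylow blocks $A,B,C,D,E$ of orders $2^{2},3^{4},11,37,67$, and the same assembly via Lemmas~\ref{le:1}--\ref{le:4}. The only divergence is the MLS for $G_2(5)$: where you anticipate having to construct one through the $BN$-pair structure, the paper simply cites \cite{LT05}, whose result on groups of order at most $10^{10}$ applies since $|G_2(5)|\approx 5.86\times10^{9}$, so no new construction is needed there.
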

\begin{proof}
From \cite{W09}, $|Ly| = 2^8.3^7.5^6.11.31.37.67$, the stabilizer $G_w=G_2(5)$ is the maximum subgroups of
$Ly$ which is the smallest permutation representation on $8,835,156=2^2.3^4.11.37.67$ points. Also from Sylow Theorem, $Ly$ has the corresponding $2^2$-order, $3^4$-order, 11-order, 37-order and 67-order subgroups. Then we appropriately choose $A$, $B$, $C$, $D$ and $E$ so that they intersect only element 1. So in this way, $Ly$ can be divided into 7 parts. Consequently, $[A,B,C,D,E,G_w]$ is an LS for $Ly$. Since $A$, $B$, $C$, $D$ and $E$ are all solvable, $G_2(5)$ also has corresponding MLS \cite{LT05}. Hence, from Lemma \ref{le:4}, $Ly$ has an MLS.\qed

\end{proof}

\section{Conclusions}

We utilize Sylow Theorem and stabilizers of the corresponding sporadic groups to construct MLSs for 13 types of sporadic groups. Now, we can get the conclusion that all sporadic groups have MLSs. Meanwhile, our methods can be used to construct MLSs for other finite simple groups.

\section*{Acknowledgements}
This work is partially supported by the National Natural Science Foundation of China (NSFC) (Nos. 61070251, 61103198, 61121061) and the NSFC A3 Foresight Program (No. 61161140320).

%% The Appendices part is started with the command \appendix;
%% appendix sections are then done as normal sections
%% \appendix

%% \section{}
%% \label{}

%% References
%%
%% Following citation commands can be used in the body text:
%% Usage of \cite is as follows:
%%   \cite{key}          ==>>  [#]
%%   \cite[chap. 2]{key} ==>>  [#, chap. 2]
%%   \citet{key}         ==>>  Author [#]

%% References with bibTeX database:

%% Authors are advised to submit their bibtex database files. They are
%% requested to list a bibtex style file in the manuscript if they do
%% not want to use model1b-num-names.bst.

%% References without bibTeX database:

% \begin{thebibliography}{00}

%% \bibitem must have the following form:
%%   \bibitem{key}...
%%

% \bibitem{}

% \end{thebibliography}

\end{document}